\documentclass[final,5p,twocolumn]{elsarticle}

\usepackage{amsmath} 
\usepackage{amssymb}  
\usepackage{amsthm}
\usepackage{hyperref}
\usepackage{cleveref}
\usepackage{subcaption}
\usepackage{dsfont}
\usepackage{xcolor}
\newcommand{\sspace}{\mkern9mu}
\newcommand{\nspace}[1]{\mkern#1mu}

\newcommand{\norm}[1]{\left \Vert #1 \right \Vert}

\newtheorem{definition}{Definition}
\newtheorem{theorem}{Theorem}
\newtheorem{corollary}{Corollary}
\newtheorem{lemma}{Lemma}
\newtheorem{proposition}{Proposition}
\newtheorem*{problem}{Problem Statement}
\newtheorem{assumption}{Assumption}
\newtheorem{remark}{Remark}
\theoremstyle{definition}
\newtheorem{example}{Example}

\newtheorem{casestudy}{Case Study}

\newcommand{\Rn}{\mathbb{R}^n}

\newcommand{\R}{\mathbb{R}}

\newcommand{\X}{\mathcal{X}}
\newcommand{\Xhat}{\widehat{\mathcal{X}}}

\newcommand{\U}{\mathcal{U}}
\newcommand{\hb}{h_{\rm b}}

\newcommand{\x}{\mathbf{x}}
\newcommand{\y}{\mathbf{y}}

\newcommand{\xhat}{\hat{\mathbf{x}}}
\newcommand{\xhatDot}{\dot{\hat{\x}}}

\newcommand{\C}{\mathbf{C}}
\newcommand{\Q}{\mathbf{Q}}
\newcommand{\K}{\mathbf{K}}
\newcommand{\A}{\mathbf{A}}
\newcommand{\B}{\mathbf{B}}
\newcommand{\e}{\mathbf{e}}

\newcommand{\Cs}{\mathcal{C}_{\rm S}} 
\newcommand{\Cb}{\mathcal{C}_{\rm B}} 
\newcommand{\Cbi}{\mathcal{C}_{\rm BI}} 
 
\newcommand{\Ci}{\mathcal{C}_{\rm I}}

\newcommand{\Cihat}{\widehat{\mathcal{C}}_{\rm I}}

\newcommand{\phitrue}[2]{\boldsymbol{\phi} (#1, #2)}
\newcommand{\phihat}[2]{\skew{5}\hat{\boldsymbol{\phi}} (#1, #2)}
\newcommand{\phitrueb}[2]{\boldsymbol{\phi}^{o}_{\rm b} (#1, #2)}
\newcommand{\phihatb}[2]{\skew{5}\hat{\boldsymbol{\phi}}^{o}_{\rm b} (#1, #2)}
\newcommand{\phitruefullb}{\phitrueb{\tau}{\x}}
\newcommand{\phihatfullb}{\phihatb{\tau}{\xhat}}
\newcommand{\phitruefullbT}{\phitrueb{T}{\x}}
\newcommand{\phihatfullbT}{\phihatb{T}{\xhat}}

\newcommand{\phitruebE}[2]{\boldsymbol{\phi}_{\rm b} (#1, #2)}
\newcommand{\phihatbE}[2]{\skew{5}\hat{\boldsymbol{\phi}}_{\rm b} (#1, #2)}
\newcommand{\phitruefullbE}{\phitruebE{\tau}{\x}}
\newcommand{\phihatfullbE}{\phihatbE{\tau}{\xhat}}
\newcommand{\phitruefullbET}{\phitruebE{T}{\x}}

\newcommand{\stmhat}{\widehat{\mathbf{\Phi}}_{\rm b}(\tau,\xhat)}
\newcommand{\stmhatT}{\widehat{\mathbf{\Phi}}_{\rm b}(T,\xhat)}

\newcommand{\Ce}{\mathcal{C}_{\rm E}(t)}
\newcommand{\ebar}{\Bar{e}_0}

\newcommand{\ub}{\mathbf{k}_{\rm b}}

\newcommand{\phinb}[2]{\boldsymbol{\phi}_{\rm b} (#1, #2)}

\newcommand{\phinom}{\phinb{\tau}{\mathbf{x}}}
\newcommand{\phinomT}{\phinb{T}{\mathbf{x}}}

\newcommand{\stmnom}{\boldsymbol{\Phi}_{\rm b}(\tau,\mathbf{x})}
\newcommand{\stmnomT}{\boldsymbol{\Phi}_{\rm b}(T,\mathbf{x})}
\newcommand{\jac}{\mathbf{F}_{\rm cl}}

\newcommand{\tb}{\tau}
\newcommand{\Tt}{T}

\newcommand{\xzero}{\mathbf{x}_0}

\newcommand{\w}{\boldsymbol{\omega}}
\newcommand{\what}{\widehat{\boldsymbol{\omega}}}
\newcommand{\J}{\mathbf{J}}

\newcommand{\bLam}{\mathbf{\Lambda}}

\allowdisplaybreaks

\journal{Systems \& Control Letters}

\usepackage{titlesec}

\titlespacing*{\section}
{0pt}  
{2ex}
{0.8ex} 

\titlespacing*{\subsection}
{0pt}
{2ex}
{0.6ex}

\begin{document}

\begin{frontmatter}

\title{Output Feedback Backup Control Barrier Functions: \\ Safety Guarantees Under Input Bounds and State Estimation Error}

\cortext[cor1]{Corresponding author, email:  \texttt{vanwijk@caltech.edu}}
\author[1]{David E. J. van Wijk\corref{cor1}}
\author[2]{Tamas G. Molnar}
\author[3]{Samuel Coogan}
\author[4]{Manoranjan Majji}
\author[1]{Aaron D. Ames}
\author[1]{Joel W. Burdick}

\affiliation[1]{
    organization={Department of Mechanical and Civil Engineering, California Institute of Technology},
    city={Pasadena},
    state={CA},
    postcode={91125},
    country={U.S.A.}
}
\affiliation[2]{
    organization={Department of Mechanical Engineering, Wichita State University},
    city={Wichita},
    state={KS},
    postcode={67260},
    country={U.S.A.}
}
\affiliation[3]{
    organization={Department of Electrical and Computer Engineering, Georgia Institute of Technology},
    city={Atlanta},
    state={GA},
    postcode={30332},
    country={U.S.A.}
}
\affiliation[4]{
    organization={Department of Aerospace Engineering, Texas A\&M University},
    city={College Station},
    state={TX},
    postcode={77840},
    country={U.S.A.}
}

\begin{abstract}
Guaranteeing the safety of controllers is vital for real-world applications, but is markedly difficult when the states are not perfectly known and when the control inputs are bounded. Backup control barrier functions (bCBFs) use predictions of the flow under a prescribed controller to achieve safety in the presence of bounded inputs and perfect state information. However, when only an estimate of the true state is known, this flow may not be precisely computed, as the initial condition is unknown. Furthermore, the true flow evolves using feedback from the estimated state, thus introducing coupling between known and unknown flows. To address these challenges, we propose a technique that leverages an uncertainty envelope centered around the estimated flow and show that ensuring the safety of this envelope guarantees that the true state satisfies the safety constraints. Additionally, we show that in the presence of state uncertainty, using the resulting \textit{Output Feedback Backup Control Barrier Functions (O-bCBFs)}, there always exists a feasible control input that can guarantee the safety of the true state, even in the presence of input constraints.
\end{abstract}

\begin{keyword}
Nonlinear Control \sep
Safety-Critical Control \sep
Output Feedback

\end{keyword}

\end{frontmatter}

\section{Introduction}
Ensuring the safety of autonomous systems is critical to their real-world deployment.
While many approaches exist to make formal guarantees on the safety of such systems, control barrier functions (CBFs) \cite{ames_2017} have become a key tool for constructing safe controllers that enforce the forward invariance of a given safe set.
While CBFs have been successfully employed in many safety-critical scenarios,
most works make the assumption that the true state is available at all times. In practice, this is a difficult assumption to make -- most autonomous systems generate decisions based on an estimate of the true state obtained from a measured output.
Thus, safety-critical control in the presence of state uncertainty (henceforth referred to as safety-critical output feedback control) is a topic that has begun to receive attention in the literature, though comparatively much less than its ``perfect state'' counterpart.

Related works can generally be separated into those which consider stochastic systems, and those which consider deterministic systems. Works considering safe stochastic output feedback controllers often use It$\hat{\rm o}$ calculus and martingale assumptions for developing probabilistic safety conditions of stochastic systems
\cite{clark_2019_sde, yaghoubi_risk-bounded_2021, yaghoubi_risk-bounded_2021-1}. These works form a natural connection to stochastic state observers/estimators and some works specialize to certain forms of estimators (e.g., Kalman filters \cite{vahs2023belief} or particle filters \cite{vahs2024risk}). In contrast to these, \cite{perception_stochastic} considers deterministic dynamical systems with stochastic measurements, and uses learning-based methods to obtain an estimate of the state, and provide probabilistic bounds on that estimate. 

In a deterministic setting, the synthesis of safety-critical controllers falls under the umbrella of ``robustness" to uncertainty. First motivated by the problem of robustness to disturbances, these works provide safety guarantees by accounting for the worst-case instantiation of the disturbance \cite{jankovic_robust_2018,breeden_robust_2023}, see \cite{alan_param_robustCBF_23} for a comparison of methods. Related works reduce the conservatism of such approaches through techniques such as observers \cite{das2025robustcontrolbarrierfunctions,wang2023DOB} or adaptation \cite{lopez_adaptive_robustCBF,robust_adaptive_24}. Motivated by robustness to disturbances, 
works \cite{agrawal_safe_2023_observer,wang2022observer} develop techniques to synthesize safety-critical controllers using output feedback by characterizing estimation error bounds, the rate at which these change, and robustifying against these errors. These approaches leverage knowledge of the estimator/observer dynamics to inform the resulting safe controllers. Other approaches tackle the problem of designing safe output feedback controllers for deterministic systems by robustifying the safety condition against worst-case state uncertainty
\cite{dean2020guaranteeingsafetylearnedperception,lindeman_outputrobustCBF,das2025_RAL_stateerror,nanayakkara2025safety}.

While the above approaches have been shown to be effective in some domains, they all suffer from the assumption that a CBF can be obtained a priori for which a safe control input always exists. For systems with control input bounds, this is a non-trivial problem even for the case of perfect state information. Motivated by this feasibility challenge, backup control barrier functions (bCBFs) \cite{gurriet_online_2018,gurriet_scalable} provide guarantees for the safety of control systems with input bounds by examining the evolution of the system under a pre-verified safeguarding controller, adding a layer of prediction to the safety condition. This technique for developing safety-critical controllers with bounded inputs still results in a tractable optimization problem, and has been successfully applied to many safety-critical scenarios \cite{dunlap2022comparing,hobbs2023rta,ko2024backup,janwani2024learning,rabiee2025soft,rivera2024forward}. 

Unlike the other works, the authors in \cite{cosner_measurement-robust_2021} utilize backup CBFs and provide a degree of robustness to uncertainty in the current state.
However, this work does not account for how the state uncertainty affects future state predictions, nor the feasibility of the resulting controller. 
Incorporating these uncertainty predictions in the control design is the key to studying the controlled invariance properties of safe output feedback controllers, which is a significant new component of the work presented herein.

In contrast to existing works, we present an approach which considers the constructive synthesis of safe sets for which there always exists a feasible output feedback control signal. 
To do this, we examine the forward evolution of the estimated state under a prescribed controller, and compare this with the forward evolution of the true state under output feedback using the same controller. By leveraging an uncertainty envelope centered around the estimated flow, we derive forward invariance conditions for a set defined entirely by the state estimate, and prove that membership of the estimate to this set guarantees the safety of the true state. Further, we show that a feasible control input always exists which can guarantee safety of the true state, even in the presence of input constraints. 
Thus, our approach handles safety for nonlinear input-constrained control systems which only have access to an estimate of the state based on noisy measurements.

The rest of the paper is organized as follows. \Cref{sec:prelims} overviews CBFs and bCBFs using perfect state information. In \Cref{sec:problem} we formalize the problem of designing output feedback safe controllers with input bounds. \Cref{sec:method1} and \Cref{sec:method2} discuss two distinct methods to develop such controllers by examining the forward evolution of an open-loop and a closed-loop state estimate, respectively, under a controller which is robust to output feedback. In \Cref{sec:examples}, we cover the results of our numerical simulations, and we conclude with \Cref{sec:conclusion}.

\section{Preliminaries} \label{sec:prelims}

\subsection{Control Barrier Functions} \label{sec:CBF}

Consider a nonlinear control affine system of the form\footnote{
Because multiple time variables will be used simultaneously, we will exclusively use $\dot{\x}$ to denote differentiation of $\x$ with respect to the time variable $t$, which we henceforth refer to as the \textit{global} time.}
\begin{align} \label{eq:affine-dynamics}
    \dot{\mathbf{x}} = \mathbf{f}(\mathbf{x}) + \mathbf{g}(\mathbf{x})\mathbf{u},
\end{align}
with state ${\mathbf{x} \!\in\! \mathcal{X} \!\subset\! \mathbb{R}^n}$ and control input ${\mathbf{u} \!\in\! \mathcal{U} \!\subseteq \!\mathbb{R}^m}$.
The functions ${\mathbf{f}:\mathcal{X} \rightarrow \mathbb{R}^n}$ and ${\mathbf{g}:\mathcal{X} \rightarrow \mathbb{R}^{n \times m}}$ are continuously differentiable.
For an initial condition ${\mathbf{x}_0 \triangleq \mathbf{x}(0)  \in \mathcal{X}}$, if $\mathbf{u}$ is given by a locally Lipschitz controller ${\mathbf{k}:\mathcal{X} \rightarrow \mathcal{U}}$, ${\mathbf{u}=\mathbf{k}(\mathbf{x})}$, the closed-loop system has a unique solution.

The safety of \eqref{eq:affine-dynamics} can be framed as a problem of ensuring the \textit{forward invariance} of a safe set in the state space.
A set ${\mathcal{C} \!\subset\! \mathbb{R}^n}$ is forward invariant along the closed-loop system if ${\mathbf{x}(0) \!\in\! \mathcal{C} \!\!\implies\! \!\mathbf{x}(t) \!\in\! \mathcal{C},}$ $\forall {t \geq 0}$, meaning that the system maintains safety for all time if it is safe initially.
Consider the safe set $\Cs$ as the 0-superlevel set of a continuously differentiable function ${h : \mathcal{X} \rightarrow \mathbb{R}}$ with
\begin{align}\label{eq:safety_CS}
    \Cs \triangleq \{\mathbf{x} \in \mathcal{X} : h(\mathbf{x}) \ge 0\},
\end{align}
where the gradient of $h$ along $\partial\Cs$ remains nonzero.

When the state $\x$ is known, the conditions for forward invariance have been well-studied in the context of CBFs.
A function ${h : \mathcal{X} \rightarrow \mathbb{R}}$ is a CBF \cite{ames_2017} for \eqref{eq:affine-dynamics} on $\Cs$ if there exists a class-$\mathcal{K}_{\infty}$ function\footnote{The function ${\alpha : \mathbb{R}_{\ge 0} \rightarrow \mathbb{R}_{\ge 0}}$ is of class-$\mathcal{K}_{\infty}$ if it is continuous, strictly increasing, $\alpha(0)=0$, and $\text{lim}_{x \rightarrow \infty} \nspace{2}\alpha(x) = \infty$.} $\alpha$ such that for all ${\mathbf{x} \in \Cs}$, $\sup_{\mathbf{u} \in \mathcal{U}} \dot{h}(\mathbf{x},\mathbf{u})
    >  -\alpha(h(\mathbf{x}))$ holds.
\begin{theorem}
[Forward Invariance Condition \cite{ames_2017}]
\label{thm: cbf}
If $h$ is a CBF for \eqref{eq:affine-dynamics} on $\Cs$, then any locally Lipschitz controller ${\mathbf{k}:\mathcal{X} \rightarrow \mathcal{U}}$, ${\mathbf{u}=\mathbf{k}(\mathbf{x})}$ satisfying 
\begin{align} \label{eq: cbf_condition}
    \nabla h(\mathbf{x}) \big( \mathbf{f}(\mathbf{x}) + \mathbf{g}(\mathbf{x}) \mathbf{u} \big) \ge -\alpha(h(\mathbf{x})),
\end{align}
for all ${\mathbf{x} \in \Cs}$ renders the set $\Cs$ forward invariant.
\end{theorem}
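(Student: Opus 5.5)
The plan is to reduce forward invariance of $\Cs$ to a scalar differential inequality along closed-loop trajectories and then apply a comparison argument. Fix a locally Lipschitz controller $\mathbf{k}$ satisfying \eqref{eq: cbf_condition} on $\Cs$ and an initial condition $\x_0 \in \Cs$. The closed-loop vector field $\mathbf{f}(\x) + \mathbf{g}(\x)\mathbf{k}(\x)$ is locally Lipschitz, so there is a unique solution $\x(t)$ on a maximal interval $[0, t_{\max})$. Define $\eta(t) \triangleq h(\x(t))$. This function is continuously differentiable, since $h$ is $C^1$ and $\x(\cdot)$ is $C^1$. Whenever $\x(t) \in \Cs$, the condition \eqref{eq: cbf_condition} gives
\[
\dot\eta(t) \ge -\alpha(\eta(t)).
\]

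The main obstacle is that this inequality is only assumed to hold on $\Cs$, so the standard comparison lemma cannot be applied directly on the whole interval. I would argue by contradiction using the first exit time. Suppose $\x(t_1) \notin \Cs$ for some $t_1 < t_{\max}$, and let
\[
t^* \triangleq \sup\{t \in [0,t_1] : \eta(s) \ge 0 \ \forall s \in [0,t]\}.
\]
By continuity, $\eta(t^*) = 0$ and $\x(t^*) \in \Cs$. I would then show that $\eta$ cannot become negative just after $t^*$.

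A naive version of this step runs as follows. At $t^*$ we have $\dot\eta(t^*) \ge -\alpha(0) = 0$. If the inequality is strict, then $\eta > 0$ just after $t^*$, which contradicts the definition of $t^*$. The borderline case $\dot\eta(t^*) = 0$ is the delicate point, because $\dot\eta \ge 0$ at a single instant does not by itself prevent crossing.

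To handle this case, I would use the fact that the set of admissible states is closed under the flow, which is Nagumo's theorem. The boundary $\partial\Cs$ is regular because $\nabla h \neq 0$ there. The inequality $\nabla h(\x)\big(\mathbf{f}(\x)+\mathbf{g}(\x)\mathbf{k}(\x)\big) \ge 0$ on $\partial\Cs$ states that the closed-loop vector field lies in the Bouligand tangent cone of $\Cs$ at every boundary point. Nagumo's theorem, which applies because solutions are unique under the Lipschitz assumption, then yields forward invariance on $[0,t_{\max})$ directly.

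As a more self-contained alternative, I would extend $\alpha$ to an extended class-$\mathcal{K}$ function on $\mathbb{R}$. The inequality then holds in a neighborhood of $t^*$. I would compare $\eta$ with the solution of $\dot y = -\alpha(y)$, $y(t^*) = 0$, which is $y \equiv 0$ because $\alpha$ is locally Lipschitz or, more generally, because the equation has unique solutions. The comparison lemma gives $\eta(t) \ge 0$ on a small interval after $t^*$, which contradicts the definition of $t^*$. Therefore $\x(t) \in \Cs$ for all $t \in [0, t_{\max})$.

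Finally, the result is stated for all $t \ge 0$ and for trajectories defined in $\mathcal{X}$, as in \cite{ames_2017}. I would conclude that $\Cs$ is forward invariant, taking the solution to exist for all $t \ge 0$, as implicitly assumed in the definition of forward invariance. Completeness would follow if $\Cs$ is compact or if the closed-loop dynamics do not exhibit finite escape. I would note this caveat rather than prove it.
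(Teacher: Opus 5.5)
Your Nagumo route is correct and is the standard proof of this result. The paper gives no proof of its own and defers to \cite{ames_2017}, where forward invariance likewise follows from $\dot h \ge -\alpha(0) = 0$ on $\partial\Cs$ via Nagumo's theorem. There, $\nabla h \neq \mathbf{0}$ on $\partial\Cs$ makes the tangent cone the half-space $\{\mathbf{v} : \nabla h(\x)\mathbf{v} \ge 0\}$, and Lipschitz continuity of the closed loop gives the uniqueness Nagumo needs.

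Your ``more self-contained alternative'', however, has a genuine gap and should be dropped. The claim ``extend $\alpha$ \dots; the inequality then holds in a neighborhood of $t^*$'' does not follow. Extending $\alpha$ to negative arguments only makes $-\alpha(\eta(t))$ defined when $\eta(t)<0$; it does not make \eqref{eq: cbf_condition} true there. That condition is assumed only for $\x \in \Cs$. By the definition of $t^*$, there are times arbitrarily close to $t^*$ from the right with $\x(t)\notin\Cs$, and at those times you have no lower bound on $\dot\eta$ at all. The comparison lemma needs the differential inequality on all of $[t^*,t^*+\delta)$, so this step assumes what is to be shown. It is exactly the obstacle you had correctly identified. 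The comparison route would work if the inequality were imposed on an open set containing $\Cs$ with an extended class-$\mathcal{K}$ function, as in the zeroing-CBF setting, but that is not the hypothesis here.

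You can also see that the alternative cannot be right as written, because it uses nothing beyond the inequality on $\Cs$ and continuity of $\dot\eta$. Take $n=1$, closed loop $\dot x = 1$, and $h(x)=-x^2$. Then $\Cs=\{0\}$, and the inequality holds on $\Cs$ because $\nabla h(0)=0$. Yet $x(t)=t$ leaves immediately, and $\eta(t)=-t^2$ violates $\dot\eta\ge-\alpha(\eta)$ for every $t>0$ under any extended class-$\mathcal{K}$ extension of $\alpha$.

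Any argument that avoids citing Nagumo must use $\nabla h\neq\mathbf{0}$ on $\partial\Cs$ somewhere. One way:
\begin{itemize}
\item Replace the closed-loop field by $\mathbf{f}+\mathbf{g}\mathbf{k}+\varepsilon\nabla h^\top$. At every boundary point this gives $\dot\eta\ge\varepsilon\|\nabla h\|^2>0$.
\item Your naive first-exit argument then works for every solution of the perturbed system.
\item Let $\varepsilon\to0$, using continuous dependence of solutions, uniqueness of the unperturbed solution, and closedness of $\Cs$.
\end{itemize}
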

This condition can be leveraged to design a \textit{safety filter} which seeks to minimize deviations from a primary controller ${\mathbf{k}_{\rm p} : \mathcal{X} \rightarrow \mathcal{U}}$, whilst ensuring the safety of \eqref{eq:affine-dynamics}. The safety filter takes the form of a quadratic program (QP):
\begin{align} 
    \mathbf{k}_{\rm safe}(\mathbf{x}) = \underset{\mathbf{u} \in \mathcal{U}}{\text{arg\,min}} \mkern9mu &
    \left\Vert \mathbf{k}_{\rm p}(\x)-\mathbf{u}\right\Vert^{2}
    \tag{CBF-QP} \label{eq:cbf-qp} \\
    \text{s.t.} \quad &
    \nabla h(\mathbf{x}) \big( \mathbf{f}(\mathbf{x}) + \mathbf{g}(\mathbf{x}) \mathbf{u} \big) \ge -\alpha(h(\mathbf{x})).
    \nonumber
\end{align}
This QP is feasible if the safety function $h$ is a CBF.
While the \eqref{eq:cbf-qp} is a highly useful tool for set invariance, obtaining a CBF which satisfies~\eqref{eq: cbf_condition} for high-dimensional systems with a bounded input set $\mathcal{U}$ is nontrivial. In light of this downside to traditional safety filtering techniques, we review a technique which allows for the automatic construction of CBFs.

\subsection{Backup Control Barrier Functions} \label{sec:bCBF}

Backup CBFs (bCBFs) provide a constructive method to generating controlled invariant sets online for nonlinear systems with input constraints \cite{gurriet_online_2018, gurriet_scalable}.  
A set $\mathcal{C} \!\subseteq\! \R^n$ is \textit{controlled invariant} if there exists a controller ${\mathbf{k}\!:\!\mathcal{X} \!\to\! \mathcal{U}}$, ${\mathbf{u}\!=\!\mathbf{k}(\x)}$ rendering $\mathcal{C}$ forward invariant for \eqref{eq:affine-dynamics}, with ${\mathbf{u} \in \mathcal{U}}$. 
As implied by its name, the bCBF method relies on a backup set and a backup controller, which we define below.
\begin{definition}[Backup Set and Controller]
\label{def: backup}
    A set $\Cb  \triangleq \{\mathbf{x}\in \mathcal{X}  :   h_{\rm b}(\mathbf{x}) \ge 0\}$ is a {backup set} if ${\Cb \subseteq \Cs}$, $\Cb$ is controlled invariant, and ${\nabla h_{\rm b}(\x) \neq \boldsymbol{0},\forall \nspace{2} \x \in \partial\mathcal{C}_{\rm B}}$. A continuously differentiable control law ${\mathbf{k}_{\rm b}: \mathcal{X} \rightarrow \mathcal{U}}$ rendering $\Cb$ forward invariant is called a {backup controller}.
\end{definition}

The benefit of backup CBFs comes from the observation that finding a small controlled invariant subset of $\Cs$ (i.e., the backup set) is often substantially easier than verifying that $\Cs$ itself is controlled invariant. For instance, as discussed in \cite{gurriet_scalable}, a backup set can often be defined by a level set of a quadratic Lyapunov function centered on a stabilizable equilibrium point, and can be rendered forward invariant by a feedback linearization controller. 
The interested reader is referred to \cite{gacsi2025braking} for constructive techniques to generate valid backup sets and controllers.

Having obtained a backup set and controller, one could directly enforce safety using a \eqref{eq:cbf-qp} for $h_{\rm b}$ since ${\Cb \subseteq \Cs}$. However, because $\Cb$ is typically much smaller than $\Cs$, this approach would lead to very conservative behavior.
Instead, to reduce conservatism, the backup CBF method expands the backup set using finite-time reachability by examining the evolution of the backup system:
\begin{align}\label{eq: f_cl}
    \dot{\x} = \mathbf{f}_{\rm cl}(\x) \triangleq \mathbf{f}(\x) + \mathbf{g}(\x)\ub(\x).
\end{align}
Specifically, the backup dynamics~\eqref{eq: f_cl} are forward integrated over a finite horizon, and if the system can safely reach the backup set from ${\x}$ using $\ub$, this state is classified as safe, even if ${\x \notin \Cb}$.
This allows system \eqref{eq:affine-dynamics} to evolve beyond the backup set and results in a controlled invariant set which is larger than $\Cb$.
The expanded controlled invariant set is denoted ${\Cbi \!\subseteq\! \Cs}$ and defined as
\begin{align} \label{def:C_BI}
    \Cbi \triangleq \left\{ \x \in \mathcal{X} \,\middle|\, 
    \begin{array}{c}
    h(\phinom) \geq 0, \forall \nspace{1} \tau \in [0,T], \\
    h_{\rm b}(\phinomT) \geq 0 \\
    \end{array}
    \right\}.
\end{align}
Here, $\phinom$ is the \textit{flow} of  system~\eqref{eq: f_cl} over the interval $\tau\in[0,T]$ for a fixed, finite horizon ${T > 0}$ starting at $\x$:
\begin{align} \label{eq: nomFlow}
    \tfrac{\partial}{\partial \tau}{\boldsymbol{\phi}_{\rm b}}(\tau,\x) = \mathbf{f}_{\rm cl}(\phinom), \sspace \phinb{0}{\x} = \x.
\end{align}
Thus, the first inequality in~\eqref{def:C_BI} states that, starting from $\Cbi$, the backup system~\eqref{eq: f_cl} evolves safely over the horizon ${\tau \in [0,T]}$, while the second inequality in~\eqref{def:C_BI} expresses that the backup system reaches the backup set within time $T$.
Importantly, the expanded set $\Cbi$ is controlled invariant.
\begin{lemma}[Controlled Invariance of $\Cbi$ \cite{gurriet_scalable}{\cite[Lem.~1]{tamasROM2023}}] \label{lemma: CBI_controlledInv}
The set $\Cbi$ is controlled invariant, and the backup controller $\ub$ renders $\Cbi$ forward invariant along \eqref{eq: f_cl} as
${\x \in \Cbi \implies \phinom \in \Cbi \subseteq \Cs, \forall \tau \geq 0}$.
\end{lemma}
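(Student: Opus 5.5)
The plan is to show directly that if $\x\in\Cbi$, then for every $s\ge 0$ the point $\y\triangleq\phinb{s}{\x}$ again satisfies both defining inequalities of $\Cbi$. Controlled invariance then follows immediately, since $\ub$ takes values in $\mathcal{U}$ and is continuously differentiable (hence locally Lipschitz) by Definition~\ref{def: backup}. The inclusion $\Cbi\subseteq\Cs$ is immediate by taking $\tau=0$ in the first inequality of \eqref{def:C_BI}, because $\phinb{0}{\x}=\x$ gives $h(\x)\ge 0$.

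The main tool is the semigroup property of the autonomous flow \eqref{eq: nomFlow}: $\phinb{\tau}{\phinb{s}{\x}}=\phinb{\tau+s}{\x}$ for all $\tau,s\ge0$. This follows from uniqueness of solutions of \eqref{eq: f_cl}, since $\mathbf{f}_{\rm cl}$ is locally Lipschitz. For forward completeness, I will argue that the trajectory exists on $[0,T]$ by membership in $\Cbi$. After time $T$ it lies in $\Cb$, which is forward invariant under $\ub$, so the solution remains in $\Cb$ for as long as it exists. I will assume the standing convention that solutions in $\Cb$ can be continued indefinitely (e.g., $\Cb$ compact); this is the one technical point I expect may need an explicit assumption.

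Fix $s\ge0$ and $\tau\in[0,T]$, and split into two cases.
\begin{itemize}
\item If $\tau+s\le T$, then $h(\phinb{\tau}{\y})=h(\phinb{\tau+s}{\x})\ge0$ directly from $\x\in\Cbi$.
\item If $\tau+s>T$, write $\phinb{\tau+s}{\x}=\phinb{\tau+s-T}{\phinomT}$. Since $\phinomT\in\Cb$ and $\Cb$ is forward invariant under $\ub$, this point lies in $\Cb\subseteq\Cs$, so $h\ge0$ there.
\end{itemize}
For the terminal condition, $\phinb{T}{\y}=\phinb{s}{\phinomT}\in\Cb$ by forward invariance of $\Cb$, so $\hb(\phinb{T}{\y})\ge0$. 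Hence $\y\in\Cbi$.

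The only real subtlety is this case split. It works because the flow leaves the original verified window $[0,T]$ exactly when it is already inside $\Cb$, so forward invariance of $\Cb$ (from Definition~\ref{def: backup}) covers every later time. Everything else is bookkeeping with the semigroup identity.
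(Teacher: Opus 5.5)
Your proof is correct and follows essentially the same route as the cited result and as the paper's own proof of Theorem~\ref{thm:CE_controlledInv}. That route combines forward invariance of $\Cb$ (so $h\ge 0$ along the flow for all $\tau\ge T$) with the semigroup identity $\phinb{\tau}{\phinb{s}{\x}}=\phinb{\tau+s}{\x}$, then re-checks both defining inequalities at $\phinb{s}{\x}$. Your forward-completeness caveat needs no extra assumption, because the paper's definition of forward invariance of $\Cb$ under $\ub$ already requires the trajectory to remain in $\Cb$ for all $t\ge 0$.
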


Because a controller exists which renders $\Cbi$ forward invariant (namely, the backup controller $\ub$), we can guarantee the safety of \eqref{eq:affine-dynamics} in the presence of input bounds by rendering the implicit set $\Cbi$ forward invariant.
By \Cref{thm: cbf}, a controller ${\mathbf{k}:\mathcal{X} \rightarrow \mathcal{U}}$, ${\mathbf{u}=\mathbf{k}(\mathbf{x})}$ makes $\Cbi$ forward invariant, and thus~\eqref{eq:affine-dynamics} is safe w.r.t.~$\Cs$, if there exist class-$\mathcal{K}_{\infty}$ functions $\alpha$, $\alpha_{\rm b}$ such that
\begin{subequations}\label{eq: nom_bcs1}
\begin{align} 
    \nabla h(\phinom)\stmnom\dot{\mathbf{x}} &\ge - \alpha (h(\phinom)), \label{eq: htraj_nom}  \\ 
    \!\!\nabla h_{\rm b}(\phinomT) \stmnomT \dot{\mathbf{x}} &\ge - \alpha_{\rm b} (h_{\rm b}(\phinomT)), \label{eq: hb_nom}
\end{align}
\end{subequations}
for all ${\tau \in [0,T]}$ and ${\x\in\mathcal{\Cbi}}$. Here, ${\dot{\mathbf{x}}= \mathbf{f}(\mathbf{x}) + \mathbf{g}(\mathbf{x})\mathbf{u}}$, and ${\stmnom\! \triangleq\! \partial \phinom/\partial \mathbf{x}}$ is the state-transition matrix (STM) capturing the sensitivity of the flow to perturbations in the initial state $\mathbf{x}$. The STM is the solution to
\begin{equation} \label{eq: stm_nominal}
  \begin{gathered} 
    \tfrac{\partial}{\partial \tau}{{\boldsymbol{\Phi}}}_{\rm b}(\tau, \mathbf{x}) \!=\! \jac(\phinom)\stmnom,
    \sspace
    \boldsymbol{\Phi}_{\rm b}(0,\mathbf{x}) = \mathbf{I}_n,
\end{gathered}
\end{equation}
where ${\jac(\mathbf{x})\triangleq {\partial \mathbf{f}_{\rm cl}(\mathbf{x})} / {\partial \mathbf{x}}}$
is the Jacobian of $\mathbf{f}_{\rm cl}$ in~\eqref{eq: f_cl}, and $\mathbf{I}_n$ is the ${n \!\times\! n}$ identity matrix.

The constraint on the flow given by \eqref{eq: htraj_nom} must hold for all $\tau$ on the set $[0,T]$, and so in practice this constraint is typically enforced for a discrete selection\footnote{The condition \eqref{eq: htraj_nom} can be can robustified between sample times by applying \cite[Thm.~1]{gurriet_scalable}.} of times in $[0,T]$.
Then,
safety can again be enforced with a QP:
\begin{align*} 
    \mathbf{k}_{\rm safe}(\mathbf{x}) = \underset{\mathbf{u} \in \mathcal{U}}{\text{arg\,min}} \mkern9mu &
    \left\Vert \mathbf{k}_{\rm p}(\x)-\mathbf{u}\right\Vert^{2} \quad \tag{bCBF-QP} \label{eq:bcbf-qp} \\
    \text{s.t.  } 
    & \eqref{eq: htraj_nom}, \ \eqref{eq: hb_nom},
\end{align*}
for all ${\tau \!\in \!\{0, \Delta, \dots, \Tt \}}$ where ${\Delta\! >\! 0}$ is a discretization time step satisfying ${T/\Delta \in \mathbb{N}}$.
In contrast to the \eqref{eq:cbf-qp}, if the backup controller is designed such that ${\ub(\mathbf{x}) \!\in\! \mathcal{U}}$ for all ${\mathbf{x} \!\in\! \Cbi}$ then the feasibility of the~\eqref{eq:bcbf-qp} is guaranteed over $\Cbi$ for appropriate $\alpha$ and $\alpha_{\rm b}$ \cite{gurriet_scalable}\cite[Lemma 2]{tamasROM2023}.

Clearly, bCBFs are a powerful and scalable tool for designing safe controllers with input bounds for complex safety specifications and system dynamics. However, the assumption that is implicit in both \Cref{sec:CBF} and \Cref{sec:bCBF} is that the state $\x$ is perfectly known. For any real-world application, the state will never be perfectly known. In reality, feedback controllers 
utilize an \textit{estimate} of the true state $\x$, which is typically generated by processing noisy sensor measurements. When the estimated and true state differ, the effects on the behavior of \eqref{eq:affine-dynamics} may be drastic, and the interplay of such systems must be studied carefully. Thus, for the remainder of the manuscript, we will focus on generating safe controllers 
which obey input bounds, and are informed only by an estimate of the state.

\section{Problem Formulation} \label{sec:problem}

Having laid the foundation for safety of dynamical systems using control barrier functions with perfect state information, we turn to the problem of guaranteeing safety using output feedback for systems with uncertain state information.
Consider the nonlinear control affine system
\begin{subequations} \label{eq:truth_eqtns}
\begin{align} \label{eq: truth}
    \dot{\mathbf{x}} &= 
    \mathbf{f}(\mathbf{x}) + \mathbf{g}(\mathbf{x})\mathbf{u}, \\
    \mathbf{y} &= \mathbf{z}(\mathbf{x}) + \mathbf{v}(t), \label{eq:measFun}
\end{align}
\end{subequations}
where ${\mathbf{x} \in \mathcal{X} \subset \mathbb{R}^n}$ represents the unknown, \textit{true} state vector, ${\mathbf{u} \in \mathcal{U} \subseteq \mathbb{R}^m}$ is the control input, and ${\y \in  \mathcal{Y} \subseteq \mathbb{R}^{y}}$ is a measured output corrupted by measurement noise ${\mathbf{v}(t) \in \mathbb{R}^{y}}$. We assume that $\mathcal{X}$ is compact and that the admissible input set $\mathcal{U}$ is an $m$-dimensional convex polytope. The functions ${\mathbf{f} : \mathcal{X} \rightarrow \Rn }$ and ${\mathbf{g} : \mathcal{X} \rightarrow \mathbb{R}^{n \times m}}$ are continuously differentiable, and the measurement function ${\mathbf{z} : \mathcal{X} \rightarrow \mathbb{R}^{y}}$ is locally Lipschitz continuous.

\begin{assumption}[Bounded Measurement Noise] \label{ass: measNoise}
The measurement noise is deterministic, piecewise continuous in time, and bounded such that ${\norm{\mathbf{v}(t)} \!\leq\! \Bar{v} \!\in\! \R_{>0}}$ for all ${t \!\geq\! 0}$. 
\end{assumption}

In estimation theory, it is typical for the dynamics model of the true state \eqref{eq: truth} to include a process disturbance representing model uncertainty. In our related works \cite{vanwijk2025_dobcbf,vanWijk_DRbCBF24}, we address model uncertainty in the context of the backup CBF method by robustifying against future state uncertainty caused by bounded disturbances. However, these works assume that the true state is perfectly known, which simplifies the problem since the developed safety-critical controllers may use state feedback.
To better distinguish from these works,
we specifically do not include a process disturbance, and instead focus on addressing state uncertainty through safety-critical output feedback control.

\subsection{Error-bounded State Estimators}

Consider also a \textit{state estimator\footnote{Throughout the text, the terms estimator and observer are used interchangeably to describe a process by which an estimate of the true state may be obtained from measurements.}} which continuously estimates the unknown state $\x$ through a ``prediction'' term and a ``correction'' term. The prediction term duplicates the dynamics \eqref{eq: truth} and the correction term compares the expected measurement using the measurement model, $\mathbf{z}$, to the actual measurement, $\mathbf{y}$, to steer the estimate towards the true state. Such estimators take the form
\begin{equation}
    \dot{\hat{\x}} = \mathbf{f}(\hat{\x}) + \mathbf{g}(\hat{\x})\mathbf{u} + \mathbf{r}(t,\xhat,\y), \label{eq:est} \\
\end{equation}
where ${\xhat\in \widehat{\mathcal{X}} \subset \mathbb{R}^n}$ represents the \textit{estimated} state vector and ${\widehat{\X}}$ is assumed to be compact. The continuously differentiable function 
${\mathbf{r}: \mathbb{R}_{\geq 0} \times \Xhat \times \mathcal{Y} \rightarrow \mathbb{R}^{n}}$ 
captures the correction term; see Example~\ref{ex:linear_correction} below.

Because the true state $\x$ is unknown, the estimated state $\hat{\x}$ is used by a feedback controller ${\mathbf{k} : \widehat{\X} \times \mathbb{R}_{\geq 0} \rightarrow \mathcal{U}}$,
\begin{equation} 
    \mathbf{u} = \mathbf{k}(\xhat,t), \label{eq:feedbackController}
\end{equation}
which is assumed to be Lipschitz continuous in ${\xhat}$ and piecewise continuous in ${t}$.
This leads to the closed-loop output feedback system given by
\begin{subequations}\label{eq:cl_outputFeedback}
\begin{align}
    &\dot{\mathbf{x}} = \mathbf{f}(\mathbf{x}) + \mathbf{g}(\mathbf{x})\mathbf{u}, \label{eq:true-CLsystem} \\ 
    &\mathbf{y} = \mathbf{z}(\mathbf{x}) + \mathbf{v}(t), \\
    &\dot{\hat{\x}} = \mathbf{f}(\hat{\x}) + \mathbf{g}(\hat{\x})\mathbf{u} + \mathbf{r}(t,\xhat,\y), \label{eq:est-CLsystem} \\ 
    & \mathbf{u} = \mathbf{k}(\xhat,t), \\
    &\x(t_0) = \x_0, \nspace{4} \xhat(t_0) = \xhat_0. \label{eq:ic_clsystemch6}
\end{align}
\end{subequations}
For initial conditions \eqref{eq:ic_clsystemch6}, systems \eqref{eq:true-CLsystem} and \eqref{eq:est-CLsystem}
have unique solutions ${\phitrue{t}{\xzero}}$ and ${\phihat{t}{\xhat_0}}$ over an interval of existence, respectively. For the rest of the paper, we take ${{t_0 = 0}}$ without loss of generality.

To establish a well-posed safety-critical output feedback control problem, the output feedback system \eqref{eq:truth_eqtns} is assumed to be locally observable such that the observer can uniquely reconstruct system states from measurements.
Furthermore, we restrict ourselves to \textit{error-bounded estimators}, as defined below, for the error ${\e_x(t) \triangleq \x(t) - \xhat(t)}$.

\begin{definition}[Error-Bounded Estimator] \label{def: error_estimator}
The state estimator in \eqref{eq:est} is called an error-bounded estimator for \eqref{eq:truth_eqtns} and \eqref{eq:feedbackController},  
if there exists a function ${\delta_x : \mathbb{R}_{\geq 0} \times \mathbb{R}_{>0} \rightarrow \mathbb{R}_{\geq 0}}$ satisfying ${\delta_x(t, \Bar{e}_0) \geq \norm{\e_x(t)}}$ for all ${t\!\geq\!0}$ with initial uncertainty ${\norm{\x_0 - \xhat_0} \leq \Bar{e}_{0} \in \R_{>0}}$, where ${\x(t)}$ and ${\xhat(t)}$ are solutions to \eqref{eq:true-CLsystem} and \eqref{eq:est-CLsystem} respectively.
\end{definition}

\begin{remark}[Classes of Error-Bounded Estimators] \label{rem: bounded_error}
    Many related works consider a form of error-bounded estimators (see e.g., \cite{agrawal_safe_2023_observer,dean2020guaranteeingsafetylearnedperception,cosner_measurement-robust_2021,confidenceAware_2024}). For a deterministic extended Kalman filter it is possible to bound the estimation error using Lyapunov stability arguments
    \cite[Ch. 11.2]{khalil_nonlinear_2015}\cite{REIF19981119}. In the case where the observer can be made contracting \cite[Lemma 5.22]{contraction_bullo} or input-to-state stable \cite{iss_og_sontag,nonlinearObs_ISS2016}, the error bound can be well described and it decays (exponentially) with a known rate. 
\end{remark}

We provide an example of an estimator for which estimation error bounds are well-studied in the literature.

\begin{example}[Linear Correction] \label{ex:linear_correction}
Many popular state estimators and their related variants (e.g., Kalman filters, Luenberger observers, Gaussian mixture filters) produce an estimate of the state using a linear correction term. 
Such estimators take the form
\begin{align}
\dot{\hat{\x}} &= \mathbf{f}(\hat{\x}) + \mathbf{g}(\hat{\x})\mathbf{u} + \mathbf{L}(t)\big(\mathbf{y} - \mathbf{z}(\hat{\x})\big), \label{eq:est_Luen} 
\end{align}
where ${\mathbf{L}(t) \in \mathbb{R}^{n \times y}}$ is the estimator gain (for example, in the case of the Kalman filter, this is the Kalman gain).
\end{example}

\subsection{Output Feedback Safety}

In this manuscript, we are interested in guaranteeing the safety of the true system using controllers which rely only on estimated states (typically referred to as \textit{output feedback}) and satisfy input constraints. Because safety is only important with respect to the true system, the safe set is a function of the true state, as defined by $\Cs$ in \eqref{eq:safety_CS}
for a continuously differentiable function ${h : \X \rightarrow \R}$. For output feedback controllers, we must obtain a new definition of forward invariance, which we adapt from \cite{agrawal_safe_2023_observer}. 
\begin{definition}[Output Feedback Safety]\label{def:outputSafety}
    A system \eqref{eq:true-CLsystem} is said to be safe with respect to a set ${\Cs}$ if the closed-loop output feedback system \eqref{eq:cl_outputFeedback} satisfies
    \begin{align}
        \x_0 \in \X_0, \xhat_0 \in \widehat{\X}_0 \implies \x(t) \in \Cs, \nspace{2} \forall \nspace{1} t \geq 0,
    \end{align}
    for initial condition sets ${\X_0,\widehat{\X}_0 \subset \Cs}$.
\end{definition}

Equipped with this modified definition of safety, we now formalize the main problem that is addressed in this paper.

\begin{problem}
    We consider the problem of obtaining conditions on the output feedback controller \eqref{eq:feedbackController} for system \eqref{eq:truth_eqtns} with an error-bounded estimator \eqref{eq:est} under Assumption~\ref{ass: measNoise}
    such that safety w.r.t $\Cs$ is satisfied in the sense of \Cref{def:outputSafety}.
\end{problem}

To obtain these conditions, we extend the framework of backup CBFs.
We leverage a measure of recursive feasibility, endowed by propagating the dynamical systems under a prescribed backup controller, as in \cite{gurriet_scalable}.

\subsection{Backup Set and Controller with Output Feedback}

Consider a backup set ${\Cb \triangleq \{\mathbf{x} \in \mathcal{X} : h_{\rm b}(\mathbf{x}) \ge 0\} \subseteq \Cs}$ for a continuously differentiable function ${\hb: \X \rightarrow \R}$.
Assume that the state estimation error can be bounded in the small region of the state space described by ${\Cb}$. This may correspond to regions of the state space that have favorable observability properties, or where the measurement error is small. For example, in the case of a vision-based estimator, this could correspond to a feature-rich location. 
\begin{assumption}[Estimation Error Bounds in $\Cb$] \label{ass: eb_inCB}
    There exists ${\Bar{e}_{\rm b} \in \mathbb{R}_{>0}}$ such that for all ${\x(t) \in \Cb}$, the estimation error ${\mathbf{e}_x(t)}$ satisfies ${\norm{\e_x(t)} \leq \Bar{e}_{\rm b}}$ for all ${t \geq 0}$, where ${\x(t)}$ and ${\xhat(t)}$ are solutions to \eqref{eq:true-CLsystem} and \eqref{eq:est-CLsystem} respectively.
\end{assumption}

As in \Cref{sec:bCBF}, we are interested in obtaining a backup controller which can render $\Cb$ forward invariant. Since in this case the state is unknown, the backup controller should render $\Cb$ invariant for \eqref{eq: truth} using output feedback. However, designing output feedback controllers directly presents a distinctly more difficult problem than designing state feedback controllers. Therefore, we assume that we can design a simple \textit{state} feedback controller, which has desirable properties with respect to ${\Cb}$. We will show that this assumption has strong implications for its output feedback counterpart.

\begin{assumption}[Robustly Forward Invariant Backup Controller] \label{ass: robust_inv_inputdist}
    The continuously differentiable backup controller ${\ub: \mathcal{X} \rightarrow \mathcal{U}}$ under \textit{state} feedback renders the backup set ${\Cb}$ forward invariant along 
    \begin{align*}
        \dot{\mathbf{x}} = \mathbf{f}(\mathbf{x}) + \mathbf{g}(\mathbf{x})\mathbf{k}_{\rm b}(\x) + \mathbf{d}_{v},
    \end{align*}
    for any additive virtual disturbance ${\mathbf{d}_{v}}$ and $\x \in \Cb$ satisfying ${{\norm{\mathbf{d}_v} \leq \mathcal{L}_{\ub} \norm{\mathbf{g}(\x)} \Bar{e}_{\rm b}}}$ where ${\mathcal{L}_{\ub}}$ is the Lipschitz constant of ${\ub}$ and $\Bar{e}_{\rm b} \in \mathbb{R}_{>0}$ satisfies \Cref{ass: eb_inCB}.
\end{assumption}
We note that because \Cref{ass: robust_inv_inputdist} considers state feedback controllers, the backup controller can be designed to be robust to additive disturbances 
via traditional techniques such as by robustifying the level sets of Lyapunov functions \cite{jankovic_robust_2018},
\cite[Ch. 3]{freeman_robust_1996},
\cite[Ch. 13.1]{khalil2002nonlinear} or see \cite{vanWijk_DRbCBF24,vanwijk2025_dobcbf}. This assumption could also be satisfied by designing an input-to-state safe (ISSf) backup controller \cite{Issf_ames}. We note that \Cref{ass: robust_inv_inputdist} may be more difficult to satisfy than in related works due to the dependence on the Lipschitz constant of the backup controller (i.e., changing the controller will change the magnitude of the virtual disturbance). However, in \Cref{sec:examples} we demonstrate how to derive a backup controller satisfying this assumption for a nonlinear spacecraft rotation example. Further, in \Cref{sec:LinearCH6}, we derive backup controllers for which \Cref{ass: robust_inv_inputdist} is no longer required (see \Cref{prop:linearsystemCBandub}).

We now show that the backup controller designed to satisfy \Cref{ass: robust_inv_inputdist} under state feedback renders ${\Cb}$ invariant for the true system under \textit{output} feedback.
\begin{lemma}[Forward Invariance of $\Cb$ under Output Feedback] \label{lemma: cb_fwd_inv}
    A controller ${\ub : \X \rightarrow \U}$ satisfying \Cref{ass: robust_inv_inputdist} renders ${\Cb}$ forward invariant along \eqref{eq:true-CLsystem} with ${\mathbf{u} = \ub(\xhat)}$ such that ${\x_0 \in \Cb \implies \x(t) \in \Cb, \nspace{1} \forall \nspace{1} t \geq 0}$.
    \begin{proof}
        Examining the output feedback system \eqref{eq:true-CLsystem} under ${\ub(\xhat)}$ and adding zero we obtain
        \begin{align*}
            \dot{\x} 
            &= \mathbf{f}(\x) + \mathbf{g}(\x)\ub(\xhat) \\[-1.5em]
            &= \mathbf{f}(\x) + \mathbf{g}(\x)\ub(\x) + \overbrace{\big( \mathbf{g}(\x) \ub(\xhat) - \mathbf{g}(\x) \ub(\x) \big)}^{\mathbf{d}_{v}}.
        \end{align*}
        Using the Cauchy–Schwarz inequality, the virtual disturbance ${\mathbf{d}_{v}}$ can be bounded as
        $\norm{\mathbf{d}_{v}} \!\leq\! \nspace{1}\mathcal{L}_{\mathbf{k}_{\rm b}} \norm{\mathbf{g}(\x)}\norm{\xhat \!-\! \x}$.
        Since ${\x_0 \in \Cb}$, \Cref{ass: eb_inCB} implies ${\norm{\xhat(t) - \x(t)} \leq \Bar{e}_{\rm b}}$ ${\forall \nspace{2} t \geq 0}$, so applying \Cref{ass: robust_inv_inputdist} completes the proof.
    \end{proof}
\end{lemma}

\subsection{Backup Flows with Output Feedback}

Next, we seek to expand the backup set to a larger set, similar to the standard backup CBF method in~\eqref{def:C_BI}.
To this end, we need to take into account how the state uncertainty affects the backup flow.
Consider the {\em closed-loop}
backup flows for the true and estimated system
\begin{subequations} \label{eq:executed_flow_all}
\begin{flalign}
    \tfrac{\partial}{\partial \tau}\nspace{-1}{\boldsymbol{\phi}_{\rm b}}(\tau,\mathbf{x}) \! &= \! \mathbf{f}(\phitruefullbE) \!+\! \mathbf{g}(\phitruefullbE)\mathbf{k}_{\rm b}(\phihatfullbE), \nspace{50}\label{eq:executedTruth} && \raisetag{2\baselineskip}\\
    \tfrac{\partial}{\partial \tau}\nspace{-1}{\skew{-4}\hat{\boldsymbol{\phi}_{\rm b}}}(\tau,\xhat) \! &= 
    \begin{aligned}[t]
    \! \mathbf{f}(\phihatfullbE) \! &+\!  \mathbf{g}(\phihatfullbE)\mathbf{k}_{\rm b}(\phihatfullbE) \\
    & + \! \mathbf{r}(t \! + \! \tau,\phihatfullbE,\y(t\!+\!\tau)), \label{eq:executedEst} 
    && \raisetag{1.5\baselineskip} 
    \end{aligned} \\
    {\boldsymbol{\phi}}_{\rm b}(0,\x) &= \x, \nspace{4} \skew{5}\hat{\boldsymbol{\phi}}_{\rm b}(0,\xhat) = \xhat, 
\end{flalign}
\end{subequations}
for all ${\tau \in [0,T]}$ where ${T > 0}$ is a fixed integration horizon; cf.~\eqref{eq: nomFlow}.
This system of coupled differential equations represents the forward evolution of \eqref{eq:true-CLsystem} and \eqref{eq:est-CLsystem} under the backup controller ${\ub}$ with output feedback.
Based on the assumptions on systems \eqref{eq:true-CLsystem} and \eqref{eq:est-CLsystem}, for any ${\x \in \mathcal{X}}$ and ${\xhat \in \Xhat}$ there exist unique solutions ${\boldsymbol{\phi}_{\rm b} : \R_{\geq 0} \times \mathcal{X} \rightarrow \mathcal{X}}$ to \eqref{eq:executedTruth} and ${\skew{5}\hat{\boldsymbol{\phi}}_{\rm b} : \R_{\geq 0} \times \Xhat \rightarrow \Xhat}$ to \eqref{eq:executedEst}.
However, both of these solutions are unknown.
The flow $\phitruefullbE$ is unknown because it depends on the true state $\x$, whereas $\phihatfullbE$ is unknown because it depends on future measurements ${\mathbf{y}(t + \tau)}$ appearing in the correction term in \eqref{eq:executedEst}.
Therefore, these backup flows cannot be used directly in control design, but they will help us establish recursive safety and feasibility guarantees.

To expand the backup set and design a safe controller, consider two coupled differential equations which capture the {\em open-loop} backup flows for true and estimated states
\begin{subequations} \label{eq: flows_computed}
\begin{flalign} 
     \tfrac{\partial}{\partial \tau}{\boldsymbol{\phi}^{o}_{\rm b}}(\tau,\mathbf{x})\nspace{-2} &= \nspace{-2} \mathbf{f}(\phitruefullb) \!+\! \mathbf{g}(\phitruefullb)\mathbf{k}_{\rm b}(\phihatfullb),\nspace{50}  \label{eq: computed_flow_true}&& \raisetag{2\baselineskip} \\
     \tfrac{\partial}{\partial \tau}{\skew{-3}\hat{\boldsymbol{\phi}^{o}_{\rm b}}}(\tau,\xhat) \nspace{-2} &= \nspace{-2} \underbrace{\mathbf{f}(\phihatfullb) \!+\! \mathbf{g}(\phihatfullb)\mathbf{k}_{\rm b}(\phihatfullb)}_{\triangleq \nspace{2} \mathbf{f}_{\rm cl}(\phihatfullb)},  \label{eq: computed_flow_est}&& \raisetag{2\baselineskip} \\
    \boldsymbol{\phi}^{o}_{\rm b}(0,\mathbf{x}) &= \x, \quad \skew{5}\hat{\boldsymbol{\phi}}^{o}_{\rm b}(0,\xhat) = \xhat,
\end{flalign}
\end{subequations}
for all ${\tau \in [0,T]}$ with ${T > 0}$.
We refer to this system as ``open-loop'' because the estimate in \eqref{eq: computed_flow_est} is not updated via a correction or innovation term like ${\mathbf{r}(t,\xhat,\y)}$ in \eqref{eq:est-CLsystem}.
The solution to \eqref{eq: computed_flow_est}, denoted ${\phihatfullb}$, is the estimated open-loop backup flow starting at the current state estimate, ${\xhat}$. 
Since there is no innovation term, ${\phihatfullb}$ can now be computed using the known state estimate ${\xhat}$. 
Therefore, the flow ${\phihatfullb}$ can be used
for feedback at run time.
The solution to~\eqref{eq: computed_flow_true}, denoted ${\phitruefullb}$, is the true open-loop backup flow.
It is the evolution of the true state, ${\x}$, propagated with the backup controller using the \textit{estimated} open-loop backup flow, $\mathbf{k}_{\rm b}(\phihatfullb)$.
The flow ${\phitruefullb}$ is unknown because it depends on the true state $\x$.
Thus, it will not be used by the control design directly, but it will help us establish formal safety guarantees.

Having discussed the backup flows, now we expand the backup set into a larger set used for safe control design.
Consider a set defined entirely by the open-loop estimated backup flow, ${\phihatfullb}$,
\begin{align} \label{eq: CI_hatdef}
    \Cihat(t) \triangleq \left\{ \xhat \in \widehat{\mathcal{X}} \,\middle|\, 
    \begin{array}{c}
    h(\phihatfullb) \geq \epsilon_\tau, \forall \nspace{1} \tau \in [0,T], \\
    h_{\rm b}(\phihatfullbT) \geq \epsilon_{\rm b} \\
    \end{array}
    \right\},
\end{align}
with additional tightening terms ${\epsilon_\tau, \epsilon_{\rm b} \!\in\! \R_{\geq 0}}$. These terms can be selected such that conditions on the true state may be inferred from conditions on the estimated state. Since these tightening terms may depend on the integration time variable ${\tau}$ and on the global time ${t}$, ${\Cihat(t)}$ is generalized to a time-varying set in \eqref{eq: CI_hatdef}.
In the following sections, we present two distinct techniques to select these tightening terms, and discuss the implications of each method.
\section{Method 1: Open-loop Safety} \label{sec:method1}

\begin{figure}
    \centering
    \includegraphics[width=1\linewidth]{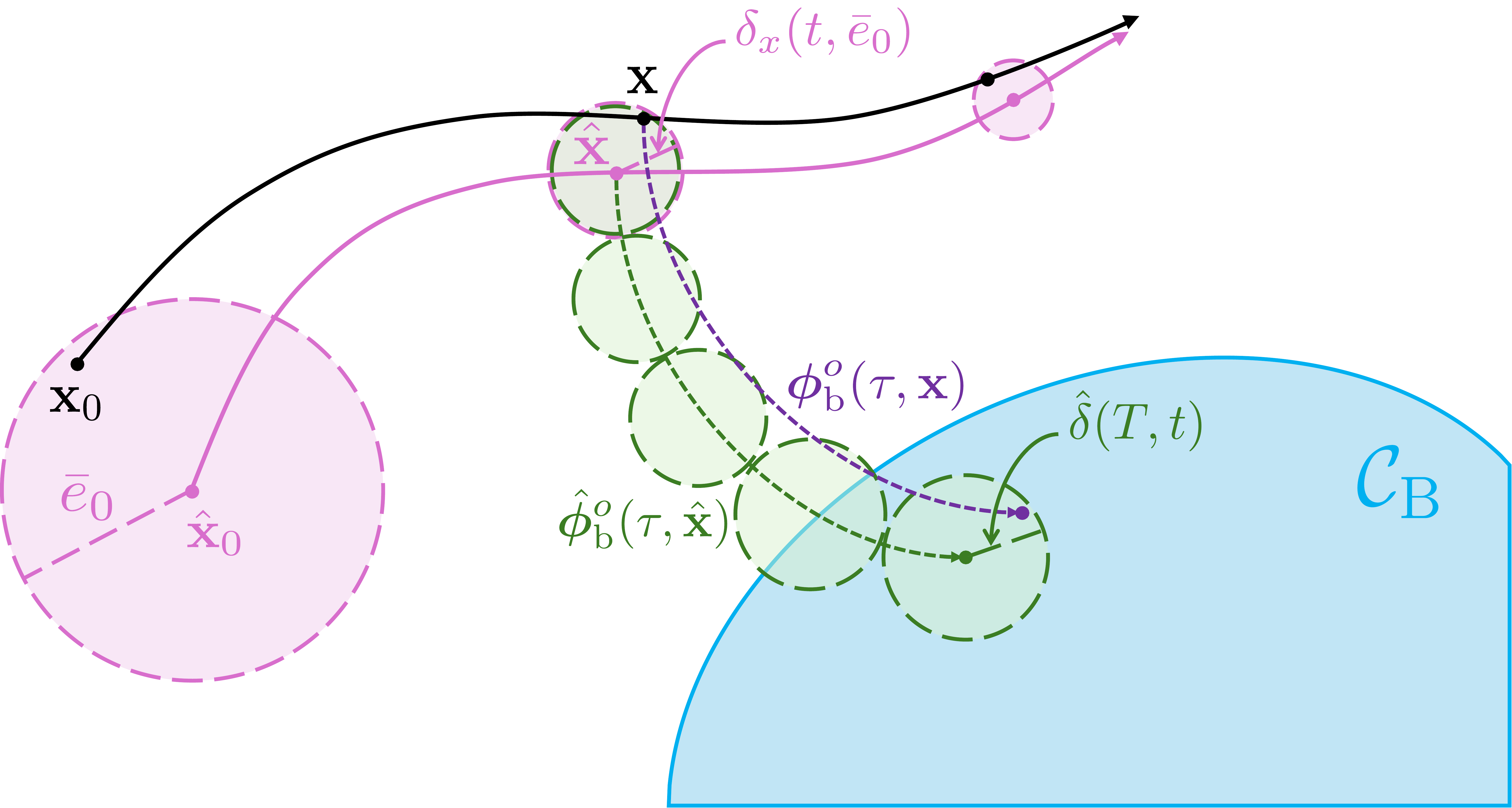}
    \vspace{-.5cm}
    \caption{Sketch of open-loop safety method for the presented output feedback backup CBF approach. Safety of the true state (solid black) is guaranteed if the tube centered on the open-loop estimate (green) contains the open-loop true flow (purple), and can safely reach ${\Cb}$.}
    \label{fig:method1_ch6_overview}
    \vskip -4mm
\end{figure}

The first method to select the tightening terms ${\epsilon_{\tau}}$ and ${\epsilon_{\rm b}}$ is illustrated in \Cref{fig:method1_ch6_overview}. 
In this {\em open-loop} method, the tightening terms are selected such that a tube centered on the estimated open-loop flow $\phihatfullb$ (depicted in green) satisfies the safety constraint (i.e., remains in $\Cs$) and the final cross-section of the tube satisfies the backup set constraint by landing in $\Cb$. We show that if this tube contains the \textit{true} open-loop flow (depicted in purple), output feedback safety of the true state can be maintained.

\subsection{Existence of Feasible Safe Controllers}

Consider a set ${\Ci \subseteq \Cs}$ defined by the true open-loop backup flow, ${\phitruefullb}$, with
\begin{align} \label{eq:CI_def_ch6}
    \Ci \triangleq \left\{ \mathbf{x} \in \mathcal{X} \,\middle|\, 
    \begin{array}{c}
    h(\phitruefullb) \geq 0, \forall \nspace{1} \tau \in [0,T], \\
    h_{\rm b}(\phitruefullbT) \geq 0 \\
    \end{array}
    \right\}.
\end{align}
This set describes all true states which can safely reach the backup set ${\Cb}$ in finite time, under the dynamics in \eqref{eq: computed_flow_true}. 
We can immediately present a powerful result for the set ${\Ci}$, despite it being unknown.
Namely, if the true initial state $\x_0$ is in $\Ci$, then the system~\eqref{eq:true-CLsystem} with the backup controller is safe w.r.t.~$\Cs$ as in Definition~\ref{def:outputSafety}.
\begin{theorem}[Existence of a Feasible and Safe Controller in $\Ci$] \label{thm: CI_controlledinv_kinda}
    If Assumptions~\ref{ass: eb_inCB} and~\ref{ass: robust_inv_inputdist} are satisfied, then for any ${\x_0 \in \Ci}$, there exists a controller ${\mathbf{k}: \Xhat \rightarrow \mathcal{U}}$, ${{\mathbf{u}=\mathbf{k}(\xhat)}}$ for system~\eqref{eq:true-CLsystem}
    such that ${\x(t) \in \Cs}$ for all ${t \geq 0}$.
    \begin{proof}
        By the definition of ${\Ci}$ in \eqref{eq:CI_def_ch6}
        \begin{align}
            \x_0 \in \Ci \implies \boldsymbol{\phi}^{o}_{\rm b}(\tau,\x_0) \in \Cs, \nspace{2} \forall \nspace{1}  \tau \in [0,T]. 
        \end{align}
        For ${\tau > T}$, 
        available measurements are processed
        such that \Cref{ass: eb_inCB} holds and the system evolves with \eqref{eq:executedTruth}, implying $\boldsymbol{\phi}_{\rm b}(0, \boldsymbol{\phi}^{o}_{\rm b}(T,\x_0)) = \boldsymbol{\phi}^{o}_{\rm b}(T,\x_0) \in \Cb$ by the definition of ${\Ci}$. Then, by \Cref{lemma: cb_fwd_inv}, ${\boldsymbol{\phi}_{\rm b}(\vartheta, \boldsymbol{\phi}^{o}_{\rm b}(T,\x_0)) \in \Cb}$ for all ${\vartheta \geq 0}$. The backup set is designed such that ${\Cb \subseteq \Cs}$, and thus the controller ${\ub}$ guarantees that ${\x(t) \in \Cs}$ for all ${t \geq 0}$ for any ${\x_0 \in \Ci}$.
    \end{proof}
\end{theorem}
We now derive tightening terms such that conditions on the true state may be inferred from conditions on the estimated state. The following flow deviation vector will be useful for developing such tightening terms:
\begin{align*}
    \Delta \phi (\tau,\x,\xhat) \triangleq \phitruefullb - \phihatfullb.
\end{align*}
For brevity, we drop the arguments of ${\Delta \phi}$. Though the vector ${\Delta \phi}$ is unknown, if this term is bounded, the tightening terms may be constructed by accounting for the worst-case perturbations to the barrier functions $h$ and $h_{\rm b}$, over the possible realizations of ${\Delta \phi}$. We formalize this with the following lemma.

\begin{lemma}[Conditions on $\Cihat(t)$ Imply Safety on $\Ci$] \label{lemma: pseudoSubsetCi}
    Let ${\hat{\delta}(\tau,t)}$ be a norm bound on the deviation between ${\phihatfullb}$ and ${\phitruefullb}$, given by \eqref{eq: flows_computed}, at time ${\tau \in [0,T]}$ and global time ${t}$, for all ${\mathbf{x} \in \mathcal{X}}$ and ${\xhat \in \Xhat}$:
    \begin{equation} \label{eq:open-loopdeltaBound}
        \big\| \Delta \phi \big\| = \big\|
        {\phitruefullb - \phihatfullb}\big\|
        \leq \hat{\delta}(\tau,t).
    \end{equation}
    If the tightening terms defining $\Cihat(t)$ in \eqref{eq: CI_hatdef} satisfy
    \begin{subequations} \label{eq:sup_epsilon_1}
        \begin{flalign}
           \!& \epsilon_\tau(\tau,t)  \! \geq  \!\!\!\!\sup_{\norm{{\Delta \phi}}\leq\hat{\delta}(\tau,t)} \!\!\!\! h(\phihatfullb) \!- \! h( \phihatfullb \! +  \!{\Delta \phi}), \label{eq:supremum_epsilon_tau_1} \\
           \!& \epsilon_{\rm b}(T,t)  \! \geq  \!\!\!\!\sup_{\norm{{\Delta \phi}}\leq\hat{\delta}(T,t)} \!\!\!\! h_{\rm b}(\phihatfullbT) \! - \! h_{\rm b}( \phihatfullbT \! +  \!{\Delta \phi}), \label{eq:supremum_epsilon_b_1} \nspace{50} \raisetag{1.5\baselineskip}
        \end{flalign}
    \end{subequations}
    for all ${\tau \in [0,T]}$ and ${t \geq 0}$, 
    then ${\xhat \in \Cihat(t) \implies \x \in \Ci}$.
    \begin{proof}
        Though the tightening terms in \eqref{eq:sup_epsilon_1} have an explicit dependence on $\tau$ and $t$, we suppress this dependence for the remainder of the paper.
        It is clear that if ${\epsilon_\tau \geq h(\phihatfullb) - h\big( \phihatfullb + {\Delta \phi}\big)}$ then
        \begin{align}
            h(\phitruefullb) 
            \!&=\! h(\phihatfullb) \!-\! ( h(\phihatfullb)  \!-\!h( \phihatfullb \!+\! {\Delta \phi}) ) \notag \\
            &\geq h(\phihatfullb) - \epsilon_\tau. \label{eq:inequality_intermedLem3}
        \end{align}
        Membership of $\xhat$ to $\Cihat(t)$ implies ${h(\phihatfullb) \geq \epsilon_{\tau}}$ so that from \eqref{eq:inequality_intermedLem3}, ${h(\phitruefullb) \geq 0}$ for any ${\xhat \in \Cihat(t)}$.
        Similarly, for any ${\xhat \in \Cihat(t)}$, the open-loop backup trajectory starting at ${\xhat}$ satisfies ${h_{\rm b}(\phihatfullbT) \geq \epsilon_{\rm b}}$ and by the same logic,
        \begin{align*}
            h_{\rm b}(\phitruefullbT)
            &\geq h_{\rm b}(\phihatfullbT) - \epsilon_{\rm b} \geq 0,
        \end{align*}
        if ${\epsilon_{\rm b} \geq h_{\rm b}(\phihatfullbT) - h_{\rm b}\big( \phihatfullbT + {\Delta \phi}\big)}$. Thus, all functions defining $\Ci$ are positive, so ${\mathbf{x} \in \Ci}$ for all ${\xhat \in \Cihat(t)}$.
    \end{proof}
\end{lemma}

\Cref{lemma: pseudoSubsetCi} allows us to establish the existence of a safe and feasible controller within the known set, $\Cihat(t)$.

\begin{theorem}[Existence of a Feasible and Safe Controller in $\Cihat(t)$] \label{thm:feasible_Cihat}
    If Assumptions~\ref{ass: eb_inCB} and~\ref{ass: robust_inv_inputdist} are satisfied
    and if ${\epsilon_{\tau}}$ and ${\epsilon_{\rm b}}$ satisfy \eqref{eq:sup_epsilon_1} for all ${\tau \in [0,T]}$ and ${t \geq 0}$,
    then for any ${\xhat_0 \in \Cihat(0)}$, there exists a controller ${{\mathbf{k}:\Xhat \rightarrow \mathcal{U}}}$, ${{\mathbf{u}=\mathbf{k}(\xhat)}}$ for system~\eqref{eq:true-CLsystem} such that ${\x(t) \in \Cs}$ for all ${t \geq 0}$.
    \begin{proof}
        By \Cref{lemma: pseudoSubsetCi} ${{\xhat_0 \!\in\! \Cihat(0) \!\!\implies\!\! \x_0 \!\in\! \Ci}}$ and by \Cref{thm: CI_controlledinv_kinda}, the controller ${\ub}$ ensures that ${{\x(t) \in \Cs, \nspace{1} \forall \nspace{1} t \geq 0}}$.
    \end{proof}
\end{theorem}

\subsection{Tightening Terms and Open-Loop Flow Bounds}

We now discuss the computation of the tightening terms $\epsilon_\tau , \epsilon_{\rm b}$ and the open-loop flow bound $\hat{\delta}(\tau,t)$.

\begin{remark}[Computing Tightening Terms]\label{rem:tighteningTerms}
For linear barrier functions, the supremum in  $\eqref{eq:sup_epsilon_1}$ may be obtained exactly, yielding tight ${\epsilon_\tau}$ and $\epsilon_{\rm b}$ terms. For quadratic barriers, the supremum over the ball of flow deviations is in the form of a \textit{trust region problem} from classical numerical optimization \cite[Ch. 4]{nocedal_numerical_2006}. Such a barrier is used effectively in \Cref{ex:db_int}. For differentiable convex barrier functions, $\eqref{eq:sup_epsilon_1}$ may be satisfied\footnote{Using the first-order condition of a differentiable convex function, ${-\nabla h (\phihatfullb)^\top \Delta\phi \geq h(\phihatfullb) - h(\phihatfullb + \Delta \phi)}$ and properties of the dual Euclidean norm.} by 
\begin{subequations}
    \begin{align*}
       \epsilon_\tau
       &\triangleq \big\|{\nabla h (\phihatfullb)}\big\|\hat{\delta}(\tau,t), \\
       \epsilon_{\rm b}
       &\triangleq \big\|{\nabla h_{\rm b} (\phihatfullbT)}\big\|\hat{\delta}(T,t).
    \end{align*}
\end{subequations}
For more complex barriers, $\eqref{eq:sup_epsilon_1}$ may be satisfied by utilizing the Lipschitz constants of $h$ and $h_{\rm b}$, yielding simpler but more conservative tightening terms.
These terms take the form ${\epsilon_{\tau} \triangleq \mathcal{L}_{h} \hat{\delta}(\tau,t)}$ and ${\epsilon_{\rm b} \triangleq \mathcal{L}_{h_{\rm b}} \hat{\delta}(T,t)}$ where ${\mathcal{L}_{h}}$ and ${\mathcal{L}_{h_{\rm b}}}$ are the Lipschitz constants of ${h}$ and ${h_{\rm b}}$, respectively. This last form of tightening term has been used effectively in related works \cite{vanWijk_DRbCBF24,vanwijk2025_dobcbf},\cite[Lemma 1]{gurriet_scalable}, but \Cref{lemma: pseudoSubsetCi} is a generalization of this form, allowing for tighter bounds when the structure of the barrier may be leveraged, or when numerical optimization tools can be used.
\end{remark}

To effectively compute the tightening terms, the flow bound $\hat{\delta}(\tau,t)$ must be obtained between the estimated open-loop backup flow and the true open-loop backup flow.

\begin{lemma}[General Open-loop Flow Bounds] \label{lemma: flowbound_method1_ch6}
For systems \eqref{eq:true-CLsystem} and \eqref{eq:est-CLsystem} let ${\mathbf{f}}$ and ${\mathbf{g}}$ be locally Lipschitz on ${\mathcal{X}}$ and ${\Xhat}$ with respective Lipschitz constants ${\mathcal{L}_{\mathbf{f}}}$ and ${\mathcal{L}_{\mathbf{g}}}$. If ${\norm{\ub(\xhat)} \leq \bar{u}}$ for all ${\xhat \in \Xhat}$, the following bound satisfies condition \eqref{eq:open-loopdeltaBound}
\begin{align} \label{eq:flowbound}
\hat{\delta}(\tau,t) \triangleq \delta_x(t, \Bar{e}_0){\rm e}^{(\mathcal{L}_{\mathbf{f}} + \mathcal{L}_{\mathbf{g}} \bar{u})\tau},
\end{align}
for ${\tau \in {[0,T]}}$ and a function 
${\delta_x}$ satisfying \Cref{def: error_estimator}.
\begin{proof}
    Recall that ${\Delta \phi(\tau,\mathbf{x}, \xhat) \!\triangleq\! \phitruefullb \!-\! \phihatfullb}$, and for notational simplicity we omit the dependence on $\tau$, $\x$, and $\xhat$. Applying Lipschitz continuity of ${\mathbf{f}}$ and ${\mathbf{g}}$
    \begin{align*}
        \norm{\Delta \phi } \leq \norm{\xhat - \x} \! + \!\! \int^\tau_0 \Big( \mathcal{L}_{\mathbf{f}} + \mathcal{L}_{\mathbf{g}} \|{\ub(\skew{5}\hat{\boldsymbol{\phi}}^{o}_{\rm b}(s,\xhat))}\| \Big) \norm{\Delta \phi } ds.
    \end{align*}
    Since the backup controller is bounded by ${\bar{u}}$, applying the Gr{\"o}nwall--Bellman inequality \cite[Lemma 2.1]{khalil2002nonlinear} and \Cref{def: error_estimator} completes the proof
    \begin{align*}
        \norm{\Delta \phi } \leq \norm{\xhat - \x}{\rm e}^{(\mathcal{L}_{\mathbf{f}} + \mathcal{L}_{\mathbf{g}} \bar{u})\tau} \leq \delta_x(t, \Bar{e}_0)\nspace{1}{\rm e}^{(\mathcal{L}_{\mathbf{f}} + \mathcal{L}_{\mathbf{g}} \bar{u})\tau}.
        &\qedhere
    \end{align*}
\end{proof}
\end{lemma}
\begin{remark}[Practicality of Bounds] \label{rem: method1-bound-bad}
    Because this bound holds for a very broad class of systems, it may be impractical for systems where ${\mathcal{L}_{\mathbf{f}}}$ and ${\mathcal{L}_{\mathbf{g}}}$ assume large values. For systems with constant ${\mathbf{g}}$, ${\mathcal{L}_{\mathbf{g}} = 0}$, and the bound will reduce to ${\hat{\delta}(\tau,t) = \delta_x(t, \Bar{e}_0){\rm e}^{\mathcal{L}_{\mathbf{f}}\tau}}$, which can be used for short time horizons, ${T}$. For linear systems, this bound can be made tight (see \Cref{sec:LinearCH6}). Alternatively, one-sided Lipschitz constants may replace ${\mathcal{L}_{\mathbf{f}}}$ and ${\mathcal{L}_{\mathbf{g}}}$. The upper bounds
    on the logarithmic norm (or matrix measure) of $\mathbf{f}$ and $\mathbf{g}$ are typically tighter than two-sided Lipschitz bounds \cite{Sontag2010}.
\end{remark}

\subsection{Output Feedback Safety-Critical Control}
 
Finally, since we have shown that membership of the \textit{estimated} state to ${\Cihat(t)}$ implies membership of the true state to ${\Ci \!\subseteq\! \Cs}$ (Lemma~\ref{lemma: pseudoSubsetCi}), and that there exists at least one safe
controller in ${\Cihat(t)}$ (Theorem~\ref{thm:feasible_Cihat}), we may derive forward invariance conditions on ${\Cihat(t)}$. By \eqref{eq: CI_hatdef}, this requires 
\begin{align} \label{eq: nagumo_cihat}
    \begin{gathered}
    \dot{h}(\phihatfullb, \mathbf{u}) - \dot{\epsilon}_\tau \ge - \alpha \big( h(\phihatfullb) - \epsilon_\tau \big), \\
    \dot{h}_{\rm b}(\phihatfullbT, \mathbf{u}) - \dot{\epsilon}_{\rm b} \ge -\alpha_{\rm b} \big( h_{\rm b}(\phihatfullbT) - \epsilon_{\rm b} \big),
    \end{gathered}
\end{align}
for extended class-${\mathcal{K}_{\infty}}$ functions ${\alpha,\alpha_{\rm b}}$ and for all ${\tau \in [0,T]}$. The total derivatives of $h$ and $h_{\rm b}$ are given by
\begin{align*}
    \begin{gathered}
        \dot{h}(\phihatfullb, \mathbf{u}) = \nabla h(\phihatfullb) \stmhat \xhatDot,\\
        \dot{h}_{\rm b}(\phihatfullbT, \mathbf{u}) = \nabla h_{\rm b}(\phihatfullbT) \stmhatT \xhatDot.
    \end{gathered}
\end{align*}
Here ${\stmhat \nspace{-1}\!\triangleq\!\nspace{-1} {\partial \phihatfullb}\nspace{-1}/\nspace{-1}{\partial \xhat}}$ is the sensitivity of the estimated backup flow to ${\xhat}$, and ${\xhatDot \!=\! \mathbf{f}(\hat{\x}) \!+\! \mathbf{g}(\hat{\x})\mathbf{u} \!+\! \mathbf{r}(t,\xhat,\y)}$. The sensitivity matrix evolves with
\begin{align}
    \tfrac{\partial }{\partial  {\tau}}\stmhat \!=\! \jac(\phihatfullb) \stmhat, \nspace{10} \widehat{\mathbf{\Phi}}_{\rm b}(0,\xhat)\!= \!\mathbf{I}_{n},
\end{align}
where
${\jac(\xhat) \triangleq \frac{\partial}{\partial \xhat}\big( 
\mathbf{f}_{\rm cl}(\xhat)
\big )}$
is the Jacobian of the closed-loop system \eqref{eq: computed_flow_est}. Because no innovation term is present, system \eqref{eq: computed_flow_est} can be viewed as a state feedback system, and thus the closed-loop Jacobian is straightforward to calculate using similar methods as in \Cref{sec:bCBF}.

We now establish the main result -- controllers satisfying these conditions ensure the safety of the true system \eqref{eq:true-CLsystem}, using only information about the estimated system. 

\begin{theorem}[Forward Invariance of $\Cihat(t)$] \label{thm: fwd_inv_Cihat}
    Any locally Lipschitz controller ${{\mathbf{k} \!:\! \Xhat \!\times\! \mathbb{R}_{\geq 0} \!\to\! \U}}$, ${{\mathbf{u} \!=\! \mathbf{k}(\xhat,t)}}$ satisfying 
    \begin{subequations} \label{eq: mainthmConstraints_ch6}
        \begin{align} 
            \!&\begin{aligned}
                \nabla h(\phihatfullb) \stmhat \big ( \mathbf{f}(\hat{\x}) + \mathbf{g}(\hat{\x})\mathbf{u} \big ) \ge \\
                - \alpha \big( h(\phihatfullb) - \epsilon_\tau \big) + \dot{\epsilon}_\tau + \hat{\rho}, \label{eq: mainthmConstraints_cont_ch6}
             \end{aligned} \\ 
            \!&\begin{aligned}
                \nabla h_{\rm b}(\phihatfullbT) \stmhatT \big ( \mathbf{f}(\hat{\x}) + \mathbf{g}(\hat{\x})\mathbf{u} \big )  \ge \\
                -\alpha_{\rm b} \big( h_{\rm b}(\phihatfullbT) - \epsilon_{\rm b} \big) +  \dot{\epsilon}_{\rm b} + \hat{\rho}_{\rm b}, 
                \label{eq: mainthmConstraints_reach_ch6}
             \end{aligned} 
        \end{align}
    \end{subequations}
    for all ${\tau \in [0,T]}$, ${t\geq0}$ and ${\xhat \in \Cihat(t)}$ with robustness terms
    \begin{align} \label{eq:robust_terms_mainthm}
    \begin{gathered}
        \hat{\rho} \geq - \nabla h(\phihatfullb) \stmhat \mathbf{r}(t,\xhat,\y), \\
        \hat{\rho}_{\rm b} \geq - \nabla h_{\rm b}(\phihatfullbT) \stmhatT \mathbf{r}(t,\xhat,\y),
    \end{gathered}
    \end{align}
    renders ${\Cihat(t)}$ forward invariant for \eqref{eq:est-CLsystem}, which implies the controller enforces ${\x(t) \!\in\! \Ci\! \subseteq \!\Cs}$, ${\forall \nspace{2} t \!\geq\! 0}$ for \eqref{eq:true-CLsystem}.
    \begin{proof}
        By applying \Cref{thm: cbf} to system \eqref{eq:est-CLsystem}, satisfaction of \eqref{eq: mainthmConstraints_ch6} yields forward invariance of ${\Cihat(t)}$. By \Cref{lemma: pseudoSubsetCi}, membership of ${\xhat}$ to ${\Cihat(t)}$ implies membership of ${\x}$ to ${\Ci \subseteq \Cs}$, thus guaranteeing safety of ${\x}$ w.r.t. ${\Cs}$.
    \end{proof}
\end{theorem}

\begin{remark}[Utility of Robustification]
    Note that when the inequalities in \eqref{eq:robust_terms_mainthm} are equalities, the invariance conditions from \eqref{eq: nagumo_cihat} are recovered exactly. 
    Though the innovation term ${\mathbf{r}(t,\xhat,\mathbf{y})}$ is known, it was found that robustifying against the innovation (i.e., over-approximating this term in \eqref{eq:robust_terms_mainthm}) can help the performance of the safety filter at the boundary of $\Cihat(t)$ by preventing excessive chattering caused by the innovation.
\end{remark}
\begin{corollary}[Robustness Terms for Linear Correction]
    For an estimator with a linear correction term as in \Cref{ex:linear_correction}, the condition \eqref{eq:robust_terms_mainthm} can be satisfied with robustness terms
    \begin{flalign} \label{eq:linear_robustness}
    \begin{aligned}
     \hat{\rho} \!&=\! \norm{ \nabla h(\phihatfullb) \stmhat \mathbf{L}(t)}\!\Big(\mathcal{L}_{\mathbf{z}} \delta_x(t,\Bar{e}_0) \!+\! \Bar{v} \Big), \nspace{50} \raisetag{1.75\baselineskip} \\
    \hat{\rho}_{\rm b} \!&=\! \norm{ \nabla h_{\rm b}(\phihatfullbT) \stmhatT \mathbf{L}(t)}\!\Big(\mathcal{L}_{\mathbf{z}} \delta_x(t,\Bar{e}_0) \!+\! \Bar{v}\Big),
    \end{aligned}
    \end{flalign}
    where ${\mathcal{L}_{\mathbf{z}}}$ is the Lipschitz constant of the measurement function $\mathbf{z}$ in \eqref{eq:measFun} and $\Bar{v}$ is the measurement noise bound.
    \begin{proof}
        First substitute ${\mathbf{L}(t)\big(\mathbf{y} - \mathbf{z}(\hat{\x})\big)}$ from \eqref{eq:est_Luen} in for ${\mathbf{r}(t,\xhat,\y)}$ in \eqref{eq:robust_terms_mainthm}. By Lipschitz continuity of the measurement function, ${\norm{\mathbf{y} - \mathbf{z}(\hat{\x})} \leq \mathcal{L}_{\mathbf{z}} \delta_x(t,\Bar{e}_0) + \Bar{v}}$. Thus by the Cauchy--Schwarz inequality, the conditions in \eqref{eq:linear_robustness} imply the satisfaction of \eqref{eq:robust_terms_mainthm}.
    \end{proof}
\end{corollary}

We can now use \Cref{thm: fwd_inv_Cihat} to develop a novel safe controller via \textit{Output Feedback
Backup CBFs (O-bCBFs)}:
\begin{align*} 
    \mathbf{k}^\star(\xhat,t) = \underset{\mathbf{u} \in \mathcal{U}}{\text{argmin}} \mkern9mu &
    \left\Vert \mathbf{k}_{\rm p}(\xhat,t)-\mathbf{u}\right\Vert^{2} \quad \tag{O-bCBF-QP} \label{eq:e-bcbf-qp} \\
    \text{s.t.  } 
    & \eqref{eq: mainthmConstraints_cont_ch6}, \ \eqref{eq: mainthmConstraints_reach_ch6},
\end{align*}
for all ${\tau \in [0,T]}$ and $t \geq 0$, which filters the primary controller ${\mathbf{k}_{\rm p} : \Xhat \times \mathbb{R}_{\geq 0} \rightarrow \mathcal{U}}$. 
Because ${\Cihat(t)}$ is not necessarily controlled invariant, the \eqref{eq:e-bcbf-qp} may technically experience infeasibility issues.
However, by \Cref{thm:feasible_Cihat}, the backup controller ${\ub(\xhat) \in \U}$ can guarantee the safety of $\x$ for any ${\xhat \in \Cihat(t)}$. Thus, if the \eqref{eq:e-bcbf-qp} becomes infeasible, one may use the backup controller to guarantee output feedback safety\footnote{According to \Cref{def:outputSafety} for ${\widehat{\mathcal{X}}_{0} \triangleq \Cihat(0)}$, ${\X_0 \triangleq \widehat{\mathcal{X}}_{0} \oplus \mathcal{B}_{\Bar{e}_0}}$ where ${\mathcal{B}_{\Bar{e}_0} \triangleq \{ \mathbf{e}_x(0) \in \mathbb R^n : \norm{\mathbf{e}_x(0)} \leq \Bar{e}_{0} \}}$.} under input bounds.
A controller realizing this idea takes the form
\begin{align} \label{eq:switched_safeControl}
    \mathbf{k}_{\rm safe}(\xhat,t) = \mathds{1}_{\mathcal{F}}( \xhat,t)\mathbf{k}^\star(\xhat,t) + \big(1 \!-\! \mathds{1}_{\mathcal{F}}(\xhat,t)\big)\mathbf{k}_{\rm b}(\xhat).
\end{align}
Here, $\mathds{1}_{\mathcal{F}}$ is the indicator function defined such that ${\mathds{1}_{\mathcal{F}}(\xhat,t) = {1}}$ if there exists a ${ \boldsymbol{u} \in \mathcal{U}}$ such that \eqref{eq: mainthmConstraints_cont_ch6} and \eqref{eq: mainthmConstraints_reach_ch6} hold, and $0$ otherwise. 
A smooth switching approach as in \cite{rabiee2025soft} could also be used. We note that in the simulation examples presented in \Cref{sec:examples}, infeasibility of the \eqref{eq:e-bcbf-qp} was not observed.

\subsection{Linear Systems} \label{sec:LinearCH6}

In order to glean additional insight about safety-critical output feedback control, we examine the special case where the dynamics and the measurement function are linear.
We provide more details on how to compute tighter flow bounds $\hat{\delta}(\tau,t)$, determine the estimator error bound $\delta_x(t,\Bar{e}_0)$, and design backup sets and backup controllers.

Consider the special case of \eqref{eq:truth_eqtns},~\eqref{eq:est}, and~\eqref{eq:feedbackController} where the dynamics are linear time-invariant, such that the closed-loop output feedback system in \eqref{eq:cl_outputFeedback} becomes
\begin{subequations} \label{eq:CL_eqtns_linear}
\begin{align}
    &\dot{\mathbf{x}} = \A\mathbf{x} + \B\mathbf{u}, \label{eq: truth_linear} \\
    &\mathbf{y} = \C \mathbf{x} + \mathbf{v}(t), \\
    &\dot{\hat{\x}} = \A\hat{\x} + \B\mathbf{u} + \mathbf{r}(t,\xhat,\y),  \label{eq: est_linear} \\
    &\mathbf{u} = \mathbf{k}(\xhat,t), \\
    &\x(0) = \x_0, \nspace{4} \xhat(0) = \xhat_0.
\end{align}
\end{subequations}
As before ${\mathbf{x} \!\in\! \mathcal{X} \!\subset \!\mathbb{R}^n}$, ${\xhat \!\in\! \Xhat \!\subset\! \mathbb{R}^n}$, ${\mathbf{u} \!\in\! \mathcal{U} \!\subseteq\! \mathbb{R}^m}$, ${\y \!\in\!  \mathcal{Y} \!\subseteq\! \mathbb{R}^{y}}$.
The dynamics are governed by ${\A \!\in\! \mathbb{R}^{n\times n}}$ and ${\B \!\in\! \mathbb{R}^{n\times m}}$, ${\C \!\in\! \mathbb{R}^{y \times n}}$ is the output matrix, 
and the feedback controller ${\mathbf{k} : \Xhat \times \mathbb{R}_{\geq 0} \rightarrow \mathcal{U}}$ is assumed to be Lipschitz continuous in ${\xhat}$ and piecewise continuous in ${t}$. The pair (${\A, \C}$) is assumed to be observable.
For this class of systems, the true and estimated open-loop backup flows in \eqref{eq: flows_computed} reduce to the following system of ODEs
\begin{subequations} \label{eq: flows_computed_linear}
    \begin{align}
        \tfrac{\partial }{\partial  {\tau}}{\boldsymbol{\phi}^{o}_{\rm b}}(\tau,\mathbf{x}) &= \mathbf{A}\phitruefullb + \mathbf{B}\mathbf{k}_{\rm b}(\phihatfullb),  \label{eq: computedbackup_truth_linear} \\
         \tfrac{\partial }{\partial  {\tau}}\skew{-3}\hat{\boldsymbol{\phi}^{o}_{\rm b}}(\tau,\xhat) &=  \mathbf{A}\phihatfullb +  \mathbf{B}\mathbf{k}_{\rm b}(\phihatfullb), \label{eq: computedbackup_est_linear} \\
         \boldsymbol{\phi}^{o}_{\rm b}(0,\mathbf{x}) &= \x,   \quad \skew{5}\hat{\boldsymbol{\phi}}^{o}_{\rm b}(0,\xhat) = \xhat.
    \end{align}
\end{subequations}

We show that for linear systems, the backup flow bound presented in \Cref{lemma: flowbound_method1_ch6} can be improved significantly.
\begin{lemma}[Linear System Flow Bounds]
For systems \eqref{eq: computedbackup_truth_linear} and \eqref{eq: computedbackup_est_linear}, if a time-varying function 
${\delta_x}$
satisfies \Cref{def: error_estimator}, then the bound
\begin{align} \label{eq: linearBound_method1}
    \hat{\delta}(\tau, t) \triangleq \delta_x(t, \Bar{e}_0) \norm{{\rm e}^{\A \tau}},
\end{align}
satisfies \eqref{eq:open-loopdeltaBound} for all ${\tau \in [0, T]}$ and ${t \geq 0}$.
\begin{proof}
    Introducing ${\Delta \phi(\tau,\mathbf{x}, \xhat) \triangleq \phitruefullb - \phihatfullb}$ 
    \begin{align*}
         \tfrac{\partial }{\partial  {\tau}} \Delta \phi(\tau,\mathbf{x},\xhat) = \mathbf{A} \Delta \phi(\tau,\mathbf{x},\xhat),
         \sspace \Delta \phi(0,\mathbf{x},\xhat) = \x - \hat{\x},
    \end{align*}
    which has the solution ${
        \Delta \phi(\tau,\mathbf{x},\xhat) \!=\! {\rm e}^{\mathbf{A}\tau} (\x - \hat{\x})}$.
    Thus, the linear system flow bound is
    \begin{align*}
        \norm{\Delta \phi(\tau,\mathbf{x},\xhat)} 
        \leq \norm{{\rm e}^{\mathbf{A}\tau}}\underbrace{\norm{\x(t) - \xhat(t)}}_{\leq \delta_x(t, \Bar{e}_0)}. &\qedhere
    \end{align*}
\end{proof}
\end{lemma}
Thus, for the open-loop method for linear systems, we can avoid applying Gr{\"o}nwall's inequality, and the bound's dependence on the control input completely disappears. 

For a linear innovation term (see \Cref{ex:linear_correction}) it is also straightforward to derive a bound $\delta_x(t, \Bar{e}_0)$ on the estimation error such that \eqref{eq: est_linear} is an error-bounded estimator according to \Cref{def: error_estimator}.

\begin{proposition}[Analytical Estimation Error Bounds]
\label{prop:linearBoundEstimation}
    For \eqref{eq: truth_linear} and \eqref{eq: est_linear} with a linear correction term as in \Cref{ex:linear_correction}, and a constant observer gain ${\mathbf{L}}$, the estimation error ${\mathbf{e}_x(t) = \x(t) - \xhat(t)}$ is bounded by:
    \begin{flalign} \label{eq:delx_linear}
        \norm{\mathbf{e}_x(t)} \!\leq\! \Bar{e}_0 \norm{{\rm e}^{\bLam t}} \!+\! \Bar{v}\! \int^t_0\!\norm{{\rm e}^{\bLam (t - \vartheta)}\mathbf{L}} \nspace{1} \!d\vartheta  \triangleq  \delta_x(t, \Bar{e}_0), 
    \end{flalign}
    where ${\bLam \triangleq \A - \mathbf{L}\C}$ and ${\norm{\e_x(0)} \leq \Bar{e}_0 \in \mathbb{R}_{>0}}$.
    \begin{proof}
        The error dynamics, ${{\dot{\mathbf{e}}_x}(t) \!=\! (\mathbf{A} \!-\! \mathbf{L} \C) \mathbf{e}_x(t) \!-\! \mathbf{L} \mathbf{v}(t)}$,
        have the solution 
        \begin{align*}
            \mathbf{e}_x(t) = {\rm e}^{\bLam t}\mathbf{e}_x(0) - \int^t_0 {\rm e}^{\bLam (t-\vartheta)} \mathbf{L} \mathbf{v}(\vartheta) \nspace{1} d\vartheta.
        \end{align*}
        Taking norms and recalling ${\norm{\mathbf{v}(t)} \!\leq\! \Bar{v}}$ yields
        \begin{align*}
            \norm{\mathbf{e}_x(t)} \leq \norm{\mathbf{e}_x(0)} \norm{{\rm e}^{\bLam t}} + \Bar{v} \int^t_0\norm{{\rm e}^{\bLam (t - \vartheta)}{\mathbf{L}}} \nspace{1} d\vartheta.
        \end{align*}
        Using ${\norm{\e_x(0)} \leq \Bar{e}_0}$ completes the proof.
    \end{proof}
\end{proposition}
The observer gain ${\mathbf{L}}$ should be chosen such that ${\bLam}$ is Hurwitz.
Note that \eqref{eq:delx_linear} is one of many possible bounds for linear observers, and it is provided for its simplicity.

We now demonstrate that an ellipsoidal level set centered on an equilibrium point inside $\Cs$ may be used as a backup set for \eqref{eq:CL_eqtns_linear}, and we provide conditions for which a simple output feedback backup controller renders the backup set forward invariant, allowing us to avoid making \Cref{ass: robust_inv_inputdist}. Without loss of generality, we take the equilibrium point to be the origin.

\begin{proposition}[Backup Set and Controller Design] \label{prop:linearsystemCBandub}
Consider a backup set given by
\begin{align} \label{eq: backupSetLinear_general}
    \mathcal{C}_{\rm B} \triangleq \{\mathbf{x} \in \mathbb{R}^n : h_{\rm b}(\x) = \gamma - \x^\top \mathbf{P} \x \geq 0  \} \subseteq \Cs,
\end{align}
with ${\gamma \!\in\! \R_{>0}}$ and symmetric positive definite matrix ${\mathbf{P} \!\succ\! 0}$. For systems \eqref{eq: truth_linear} and \eqref{eq: est_linear} satisfying \Cref{ass: eb_inCB}, the output feedback backup controller ${\ub(\xhat) = -\mathbf{K}\xhat}$ renders $\Cb$
forward invariant for gains ${\mathbf{K} \in \R_{\geq 0}^{m \times n}}$ which satisfy
\begin{align} \label{eq:conditionubLinearRobust}
    \lambda_{\rm min}(\mathbf{Q}) \geq 2 \Bar{e}_{\rm b} 
    \sqrt{{\lambda_{\rm min}(\mathbf{P})}/{\gamma}}
    \norm{\mathbf{P} \B \K},
\end{align}
for ${\mathbf{Q} \!=\! \mathbf{Q}^\top \!=\! -\big( \big(\A - \mathbf{B}\K\big)^{\top}\mathbf{P} + \mathbf{P}\big(\A - \mathbf{B}\K\big) \big) \!\succ\! 0}$.
\begin{proof}
    The closed-loop dynamics of \eqref{eq: truth_linear} with ${\mathbf{u} = -\mathbf{K}\xhat}$ are given by ${\dot{\x} = \mathbf{A}\x - \mathbf{B}\mathbf{K}\xhat}$ and thus
    adding zero we have
    \begin{align}
            \dot{\x} 
            &= \underbrace{\big(\A - \mathbf{B}\K\big)}_{\triangleq \nspace{2}\A_{\rm cl}}\x + \B\K\underbrace{(\x - \xhat)}_{\e_x}. \label{eq: cb_cl_2}
    \end{align}
    To prove forward invariance of ${\Cb}$, we expand the time derivative of ${h_{\rm b}}$
    and collect like terms such that
    \begin{align}
        \dot{h}_{\rm b}(\x,\xhat) &= -\x^\top \underbrace{\big(\A_{\rm cl}^\top \mathbf{P} + \mathbf{P} \A_{\rm cl} \big)}_{\triangleq \nspace{2} -\mathbf{Q}} \x - 2 \x^\top \mathbf{P} \B \K \e_x.
    \end{align}
    If ${\mathbf{Q}}$ is positive definite, then ${\x^\top \mathbf{Q} \x}$ will be a positive scalar. Thus, for forward invariance of ${\Cb}$, we require
    \begin{align} \label{eq: intermed1}
        \x^\top \mathbf{Q} \x \geq 2 \x^\top \mathbf{P} \B \K \e_x, \quad \forall \nspace{2} \x \in \partial\Cb.
    \end{align}
    Along $\partial\Cb$ we have\footnote{Given a symmetric matrix $\mathbf{S}$, we let $\lambda_{\rm min}(\mathbf{S})$ and 
    $\lambda_{\rm max}(\mathbf{S})$ denote its minimum and maximum eigenvalues, respectively.} that
    ${\gamma = \x^\top \mathbf{P} \x \geq \lambda_{\rm min}(\mathbf{P}) \norm{\x}^2}$, so
    \begin{align} \label{eq: x_boundary}
        \norm{\x} \leq 
        \sqrt{{\gamma}/{\lambda_{\rm min}(\mathbf{P})}}, \quad \forall \nspace{2} \x \in \partial\Cb.
    \end{align}
    With
    \Cref{ass: eb_inCB} one has
    ${2 \x^\top \mathbf{P} \B \K \e_x \!\leq\! 2 \!\norm{\x}\! \norm{\mathbf{P}\B\K}\! \Bar{e}_{\rm b}}$, and since $\Q$ is symmetric we have ${\x^\top \mathbf{Q} \x \!\geq\! \lambda_{\rm min}(\mathbf{Q}) \norm{\x}^2}$.
    Thus, the condition ${\lambda_{\rm min}(\mathbf{Q}) \norm{\x}^2 \!\geq\! 2 \norm{\x} \norm{\mathbf{P}\B\K} \Bar{e}_{\rm b}}$
    implies satisfaction of \eqref{eq: intermed1}. Simplifying and applying \eqref{eq: x_boundary} completes the proof.
\end{proof}
\end{proposition}
Note that in the case where the controller in \Cref{prop:linearsystemCBandub} is designed to satisfy \Cref{ass: robust_inv_inputdist}, the condition on ${\K}$ is very similar to \eqref{eq:conditionubLinearRobust}, but slightly more conservative. Recognizing that in \Cref{prop:linearsystemCBandub} the virtual disturbance to consider, ${\mathbf{d}_v = \B \K \e_x}$, is bounded as ${\norm{\mathbf{d}_v} \leq \norm{\B}\norm{\K}\Bar{e}_{\rm b}}$,
then to satisfy \Cref{ass: robust_inv_inputdist}, the requirement on the gain matrix is:
\begin{align}
    \lambda_{\rm min}(\mathbf{Q}) \geq 2 \Bar{e}_{\rm b} 
    \sqrt{{\lambda_{\rm min}(\mathbf{P})}/{\gamma}}
    \norm{\mathbf{P}}\norm{\B}\norm{\K}.
\end{align}
\section{Method 2: Closed-loop Safety} \label{sec:method2}

\begin{figure}
    \centering
    \includegraphics[width=1\linewidth]{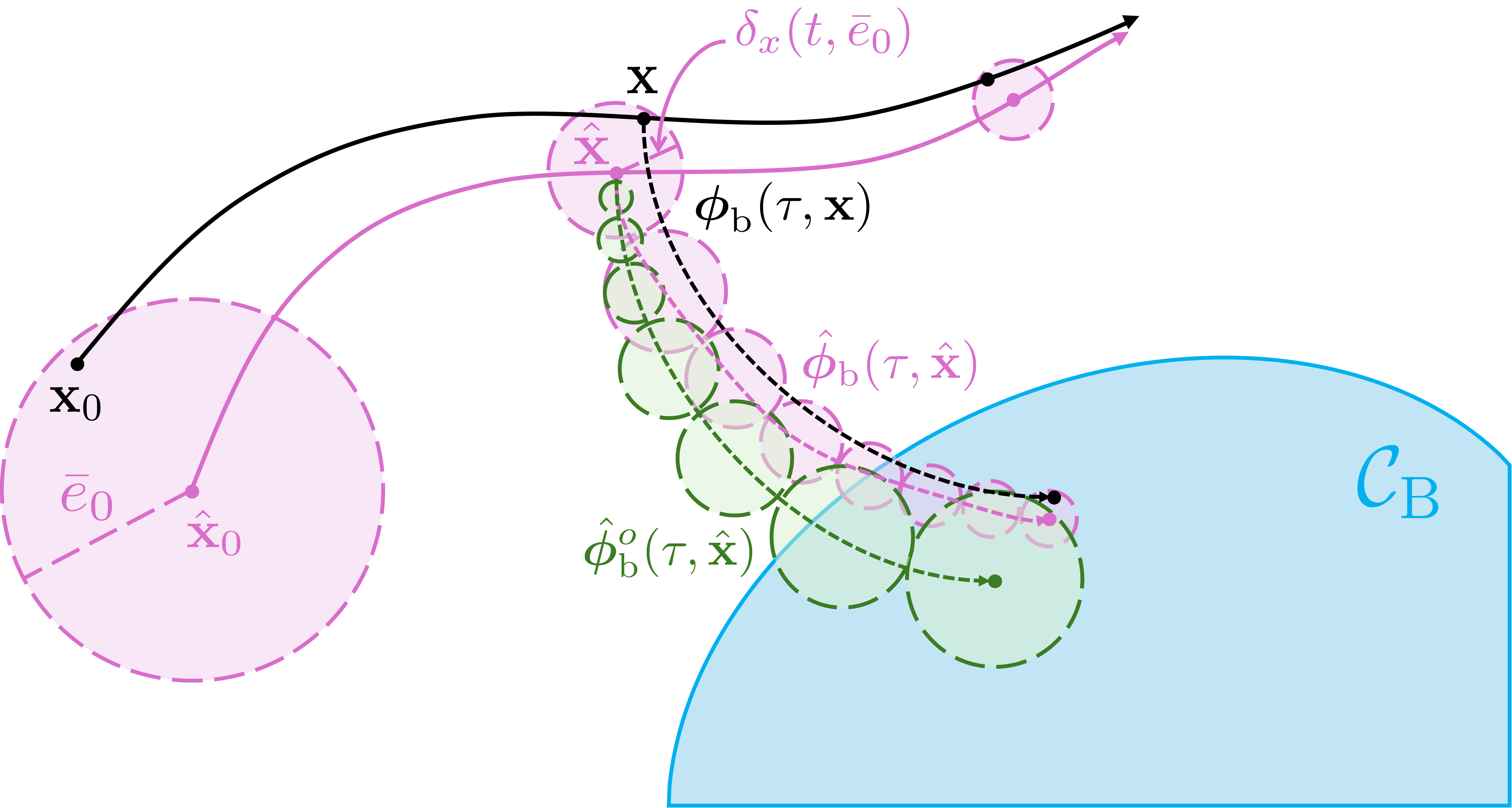}
    \caption{Illustration of closed-loop safety method for the presented output feedback CBF approach.
    The norm ball tube centered on the open-loop estimate (green) contains the closed-loop estimated flow (pink dashed), and the norm ball tube centered on the closed-loop estimated flow (pink) contains the true closed-loop flow (black dashed). Safety of the true state (solid black) is guaranteed if the tube constructed from the sum of the green and pink tubes can safely reach ${\Cb}$.
    }
    \vskip -4mm
    \label{fig:method2_ch6_overview}
\end{figure}

This section describes the second method of selecting the tightening terms ${\epsilon_{\tau}}, {\epsilon_{\rm b}}$ (depicted in \Cref{fig:method2_ch6_overview}) by using information about the \textit{closed-loop} system.
Since the estimated and true closed-loop flows in \eqref{eq:executed_flow_all} cannot be computed, this method still relies on computing the estimated open-loop flow. However, by ensuring that a tube centered on $\phihatfullb$ contains $\phihatfullbE$, and that a tube centered on $\phihatfullbE$ contains $\phitruefullbE$, we may obtain tightening terms that connect the open-loop estimate to the closed-loop true state.

\subsection{Existence of Feasible Safe Controllers}

Consider a set based on the closed-loop backup flow of the true system given by \eqref{eq:executedTruth}:
\begin{align} \label{eq: CE}
    \mathcal{C}_{\rm E}(t) \triangleq \left\{ \x \in \mathcal{X} \,\middle|\, 
    \begin{array}{c}
    h(\phitruefullbE) \geq 0, \forall \nspace{1} \tau \in [0,T], \\
    h_{\rm b}(\phitruefullbET) \geq 0 \\
    \end{array}
    \right\},
\end{align}
where ${\Ce \subseteq \Cs}$. As in \Cref{sec:method1}, we have defined a set which is a based on the true state, and is therefore unknown at all times. However, we can show that, by construction, this set is controlled invariant.

\begin{theorem}[Controlled Invariance of $\Ce$] \label{thm:CE_controlledInv}
    For the output feedback system \eqref{eq:cl_outputFeedback} satisfying Assumptions~\ref{ass: eb_inCB} and~\ref{ass: robust_inv_inputdist}, the set ${\mathcal{C}_{\rm E}(t)}$ is controlled invariant\footnote{A time-varying set ${\mathcal{C}(t) \!\subset\! \mathbb{R}^n}$ is controlled invariant if a controller $\mathbf{k}\!:\!\mathcal{X} \!\times\! \R \!\rightarrow \mathcal{U}$, $\mathbf{u}\!=\!\mathbf{k}(\mathbf{x},t)$ exists which renders $\mathcal{C}(t)$ forward invariant.} and the output feedback backup controller ${\mathbf{u} = \ub(\xhat)}$ renders ${\mathcal{C}_{\rm E}(t)}$ forward invariant\footnote{\cite[Def. 4.10]{blanchini_set-theoretic_2015} A time-varying set ${\mathcal{C}(t) \subset \mathbb{R}^n}$ is forward invariant along \eqref{eq:true-CLsystem} if for all $t_0$, ${\mathbf{x}(t_0) \in \mathcal{C}(t_0)\! \implies \!\mathbf{x}(t) \in \mathcal{C}(t)}$ for all ${t \geq t_0}$.} along \eqref{eq:true-CLsystem} such that 
    \begin{align}
        \x_0 \in \mathcal{C}_{\rm E}(0) \implies \boldsymbol{\phi}_{\rm b}(t,\x_0) \in \Ce, \forall \nspace{1} t \geq 0.
    \end{align}
    \begin{proof}
        From the definition of ${\Ce}$ and with \Cref{lemma: cb_fwd_inv}
        \begin{align} \label{eq:intermed1_ch6}
            \x_0 \in \mathcal{C}_{\rm E}(0) \implies \boldsymbol{\phi}_{\rm b}(\tau,\x_0) \in \Cb \subseteq \Cs, \forall \nspace{1}\tau \geq T. 
        \end{align}
        By the recursive nature of the flow, for any ${\x \in \Rn}$ and ${\tau,t \geq 0}$, ${\boldsymbol{\phi}_{\rm b}(\tau + t,\x) = \boldsymbol{\phi}_{\rm b}(\tau,\boldsymbol{\phi}_{\rm b}(t,\x))}$.
        Using \eqref{eq:intermed1_ch6} and the flow's recursive property we have 
        \begin{align} \label{eq:intermed2_ch6}
            \x_0 \in \mathcal{C}_{\rm E}(0) \implies \boldsymbol{\phi}_{\rm b}(T,\boldsymbol{\phi}_{\rm b}(t,\x_0)) \in \Cb, \forall \nspace{1}t \geq 0. 
        \end{align}
        From \eqref{eq:intermed1_ch6} and from the definition of ${\Ce}$ in \eqref{eq: CE}, ${\x_0 \in \mathcal{C}_{\rm E}(0) \implies \boldsymbol{\phi}_{\rm b}(\tau,\x_0) \in \Cs, \forall \nspace{1} \tau \geq 0}$. Using the recursive nature of the flow once again
        \begin{align} \label{eq:intermed3_ch6}
            \x_0 \!\in\! \mathcal{C}_{\rm E}(0) \!\!\implies\!\! \boldsymbol{\phi}_{\rm b}(\tau,\boldsymbol{\phi}_{\rm b}(t,\x_0)) \!\in\! \Cs, \nspace{-3}\forall \nspace{1}\tau \!\in\! [0,T], \nspace{-2}\forall \nspace{1}t \geq 0. \nspace{36}\raisetag{.83\baselineskip}
        \end{align}
        Thus, \eqref{eq:intermed2_ch6} and \eqref{eq:intermed3_ch6} satisfy the definition of ${\Ce}$ in \eqref{eq: CE} which completes the proof.
    \end{proof}
\end{theorem}

This gives a stronger result about the unknown set than in \Cref{thm: CI_controlledinv_kinda} because we can guarantee that all true states which belong to ${\Ce}$ can remain in ${\Ce}$. However, because this set is unknown, the control design cannot use such a set, so there are no significant practical implications of this stronger theoretical result. 
We now inform the design of the tightening terms with the following Lemma.

\begin{lemma}[Conditions on $\Cihat(t)$ Imply Safety on $\Ce$] \label{lemma: pseudoSubsetCE}
    Let ${\hat{\delta}(\tau,t)}$ be a norm bound on the deviation between ${\phihatfullb}$ and ${\phitruefullbE}$ at time ${\tau \in [0,T]}$ and global time ${t}$, for all ${\mathbf{x} \in \mathcal{X}}$ and ${\xhat \in \Xhat}$:
    \begin{equation} \label{eq: bound_method2_requirement}
        \big\|{\underbrace{\phitruefullbE - \phihatfullb}_{\triangleq \nspace{2} \Delta\varphi}}\big \| \leq \hat{\delta}(\tau,t).
    \end{equation}
    If the tightening terms defining $\Cihat(t)$ in \eqref{eq: CI_hatdef} satisfy
    \begin{subequations} \label{eq:sup_epsilon_2}
        \begin{flalign}
           \!& \epsilon_\tau  \! \geq  \!\!\!\!\sup_{\norm{{\Delta \varphi}}\leq\hat{\delta}(\tau,t)} \!\!\!\! h(\phihatfullb) \!- \! h( \phihatfullb \! +  \!{\Delta \varphi}), \label{eq:supremum_epsilon_tau_2} \\
           \!& \epsilon_{\rm b} \! \geq  \!\!\!\!\sup_{\norm{{\Delta \varphi}}\leq\hat{\delta}(T,t)} \!\!\!\! h_{\rm b}(\phihatfullbT) \! - \! h_{\rm b}( \phihatfullbT \! +  \!{\Delta \varphi}), \label{eq:supremum_epsilon_b_2} \nspace{50} \raisetag{1.5\baselineskip}
        \end{flalign}
    \end{subequations}
    for all ${\tau  \in [0,T]}$ and ${t \geq 0}$, then ${\xhat \in \Cihat(t) \implies \x \in \mathcal{C}_{\rm E}(t)}$.
    \begin{proof}
        The proof is analogous to \Cref{lemma: pseudoSubsetCi}.
    \end{proof}
\end{lemma}
Note that the observations and discussions made in \Cref{rem:tighteningTerms} on how to compute the supremum terms in \eqref{eq:sup_epsilon_2} for different forms of barriers apply to \Cref{lemma: pseudoSubsetCE} as well.
We now present a result which is analogous to \Cref{thm:feasible_Cihat} for the closed-loop method.

\begin{theorem}[Existence of a Feasible and Safe Controller in $\Cihat(t)$] \label{thm:feasible_Cihat_method2}
If ${\epsilon_{\tb}}$ and ${\epsilon_{\rm b}}$ satisfy \eqref{eq:sup_epsilon_2}
for all ${{\tau \in [0,T]}}$ and ${{t \geq 0}}$, with ${\hat{\delta}(\tb,t)}$ defined in~\eqref{eq: bound_method2_requirement}, then for any ${\xhat_0 \in \Cihat(0)}$, there exists a controller ${\mathbf{k} : \Xhat \rightarrow \mathcal{U}}$, ${\mathbf{u} = \mathbf{k}(\xhat)}$ for system \eqref{eq:true-CLsystem} such that ${\x(t) \in \mathcal{C}_{\rm E}(t)\subseteq \Cs}$, for all ${t \geq 0}$.
\begin{proof}
    By \Cref{lemma: pseudoSubsetCE}, ${\xhat_0 \in \Cihat(0) \implies \x_0 \in \mathcal{C}_{\rm E }(0)}$, and by \Cref{thm:CE_controlledInv} the backup controller ${\ub}$ ensures that 
    ${\x(t) \in \mathcal{C}_{\rm E}(t)\subseteq \Cs, \forall \nspace{1} t \geq 0}$.
\end{proof}
\end{theorem}

\subsection{Tightening Terms and Closed-Loop Flow Bounds}

As discussed in \Cref{rem: method1-bound-bad}, the bound in \Cref{lemma: flowbound_method1_ch6} may be impractical for \textit{general} nonlinear systems, especially because it relies on the Lipschitz constants ${\mathcal{L}_{\mathbf{f}}}$ and ${\mathcal{L}_{\mathbf{g}}}$ of the dynamical system. These constants are inherent to the system, giving us no control over them (besides restricting the domain of operation, i.e., ${\mathcal{X},\Xhat}$). To address this drawback, we can instead derive a bound as required by \Cref{lemma: pseudoSubsetCE} which relies on characterizations of the \textit{closed-loop state feedback} backup dynamics: ${\mathbf{f}_{\rm cl}(\x) = \mathbf{f}(\x) + \mathbf{g}(\x)\ub(\x)}$.

\begin{lemma}[Closed-loop Flow Bounds] \label{lemma: flowbound_method2_ch6}
Let the closed-loop backup dynamics ${\mathbf{f}_{\rm cl}}$ in \eqref{eq: computed_flow_est} and \eqref{eq:executedEst} be one-sided Lipschitz\footnote{See \cite[Ch. 3]{contraction_bullo} for details.} with constant ${\kappa_{\rm cl}\!\in\! \R}$ and the measurement function ${\mathbf{z}}$ in \eqref{eq:measFun} be Lipschitz continuous with constant ${\mathcal{L}_{\mathbf{z}}}$. If the estimator takes the linear correction form as in \Cref{ex:linear_correction}, and ${\norm{\mathbf{L}(t)} \!\leq\! \Bar{L}}$ for all ${t \!\geq\! 0}$, then for all ${\tau \!\in\! [0,T]}$ and ${t \!\geq\! 0}$ the following bound satisfies \eqref{eq: bound_method2_requirement}
\begin{multline} \label{eq:bound_method2}
\hat{\delta}(\tau,t) \triangleq \delta_x(t + \tau, \Bar{e}_0) + \frac{\Bar{L} \Bar{v}}{\kappa_{\rm cl}} \big({\rm e}^{\kappa_{\rm cl}\tau} - 1\big) \\ + \Bar{L} \mathcal{L}_{\mathbf{z}} \int^\tau_0 {\rm e}^{\kappa_{\rm cl}(\tau - s)} \delta_x(t+s,\Bar{e}_0) \nspace{1} ds,
\end{multline}
for a time-varying function 
${\delta_x}$ 
from \Cref{def: error_estimator}.
\begin{proof}
    Recalling ${\Delta\varphi(\tau,\x,\xhat)  \triangleq \phitruefullbE - \phihatfullb}$, dropping the dependence on its arguments,  adding zero and applying the triangle inequality
    we have
    \begin{align*}
        \|{\Delta\varphi}\| &= \|{\big(\phihatfullbE - \phihatfullb \big) + \big( \phitruefullbE - \phihatfullbE  \big)}\|, \\
        &\leq \|{\phihatfullbE - \phihatfullb }\| +  \|{\phitruefullbE - \phihatfullbE}\|.
    \end{align*}
    Since ${\|{\phitruefullbE - \phihatfullbE }\|}$ represents the difference between the solutions to ${\eqref{eq:true-CLsystem}}$ and \eqref{eq:est-CLsystem} at ${t + \tau}$, we have
    \begin{align} \label{eq: intermed_method2_boundproof}
        \norm{\Delta\varphi}
        \leq \|{\phihatfullbE - \phihatfullb}\| +  \delta_x(t+\tau,\Bar{e}_0).
    \end{align}
    The remainder of the proof is focused on bounding the first term on the right-hand side of \eqref{eq: intermed_method2_boundproof}. To do this, we leverage the properties of ${\mathbf{f}_{\rm cl}}$. 
    Examining \eqref{eq:executedEst}, one can immediately notice that this is dynamical system \eqref{eq: computed_flow_est} with an additive disturbance ${\mathbf{d}_\Delta \!\triangleq\! \mathbf{L}(t \!+\! \tau)\big(\mathbf{y}(t \!+\! \tau) \!-\! \mathbf{z}(\phihatfullbE)\big)}$. The disturbance is expanded as
    \begin{align*}
        \mathbf{d}_\Delta = \mathbf{L}(t + \tau)\big(\mathbf{z}(\phitruefullbE) - \mathbf{z}(\phihatfullbE) + \mathbf{v}(t+\tau)\big).
    \end{align*}
    By Lipschitz continuity of the measurement function ${\mathbf{z}}$, and the fact that ${\norm{\mathbf{L}(t)} \leq \Bar{L}}$ and ${\norm{\mathbf{v}(t)} \leq \Bar{v}}$ for all ${t \geq 0}$, this disturbance can be bounded such that
    \begin{align}
        \norm{\mathbf{d}_\Delta \nspace{-1}} 
        \leq \Bar{L}\big(\mathcal{L}_{\mathbf{z}} \delta_x(t+\tau,\Bar{e}_0) + \Bar{v}\big). \label{eq:disturbance_ch6_method2}
    \end{align}
    Applying \cite[Corollary 3.17]{contraction_bullo} yields
    \begin{align*}
        \norm{\phihatfullbE \!-\! \phihatfullb } \!\leq\! \Bar{L}\! \!\int^\tau_0 \!\!\!{\rm e}^{\kappa_{\rm cl}(\tau - s)} \big(\mathcal{L}_{\mathbf{z}} \delta_x(t\!+\!s,\Bar{e}_0) \!+\! \Bar{v}\big) ds,
    \end{align*}
    and integrating completes the proof.
\end{proof}
\end{lemma}
\begin{remark}[Innovation as a Decaying Disturbance]
    Because in the long term the estimation error is expected to decay, the ``disturbance'' bound considered in \eqref{eq:disturbance_ch6_method2} also decays,
    which may result in a practical (possibly tight) overall bound. We also note that in the case of a negative constant $\kappa_{\rm cl}$, the backup dynamics are said to be contracting \cite{contraction_lcss_bullo24}, yielding even tighter flow bounds. 
\end{remark}

With \Cref{thm:feasible_Cihat_method2}, it is trivial to show that the conditions in \Cref{thm: fwd_inv_Cihat} render ${\Cihat(t)}$ forward invariant for ${\epsilon_\tau}$ and ${\epsilon_{\rm b}}$ given in \Cref{lemma: pseudoSubsetCE}. By the same arguments then, solving the \eqref{eq:e-bcbf-qp} and changing the ${\epsilon_{\rm b}}$, ${\epsilon_\tau}$, ${\dot{\epsilon}_{\rm b}}$, and ${\dot{\epsilon}_\tau}$ terms in \eqref{eq: mainthmConstraints_ch6} yields a control signal which is output feedback safe (by \Cref{def:outputSafety}) for all ${\x_0,\xhat_0}$ satisfying ${\Bar{e}_{0} \!\geq\! \norm{\x_0 \!-\! \xhat_0}}$ and ${\xhat_0 \!\in\! \Cihat(0)}$. In this case though, ${\xhat \!\in\! \Cihat(t) \!\implies\! \x \!\in\! \Ce\! \subseteq\! \Cs}$. 
\begin{remark} 
[Applicability of Methods]
As emphasized throughout the paper, 
the closed-loop and open-loop methods are similar, differing primarily in the form of the flow bounds used to relate the true and estimated states (i.e., \Cref{lemma: flowbound_method1_ch6}, \Cref{lemma: flowbound_method2_ch6}). The open-loop method is comparatively simpler, and results in reasonable flow bounds for linear systems or systems with an input map $\mathbf{g}$ that is constant or has a small Lipschitz constant. In contrast, the closed-loop method may be particularly useful when $\mathbf{f}_{\rm cl}$ can be made contracting, or when $\mathbf{g}$ is highly nonlinear.
\end{remark}
\section{Numerical Examples} \label{sec:examples}

We now demonstrate the efficacy of the presented techniques using two simulation examples\footnote{Code and videos for all simulation examples may be found at \href{https://github.com/davidvwijk/OutputFeedbackCBF}{\texttt{https://github.com/davidvwijk/OutputFeedbackCBF}}.}.

\begin{casestudy}[Double Integrator] \label{ex:db_int}
Consider the double integrator system with
\begin{subequations} \label{eq:CL_eqtns_dbint_ch6}
\begin{align}
    &\dot{\mathbf{x}} =
    \big[x_2\nspace{5} u \big]^\top, \label{eq: truth_dbInt} \\
    &\dot{\hat{\x}} = 
    \big[\hat{x}_2\nspace{5} u \big]^\top \!+ \mathbf{L}\big({y} - \C\hat{\x}\big),  \label{eq: est_dbInt} \\
    &\x(0) = \x_0, \nspace{4} \xhat(0) = \xhat_0,
\end{align}
\end{subequations}
with true state $\mathbf{x} \!=\! [x_1\nspace{5} x_2]^\top \!\in\! \X\! \subset \!\mathbb{R}^2$, estimate $\xhat \!= \![\hat{x}_1\nspace{5} \hat{x}_2]^\top \!\in\! \Xhat \!\subset\! \mathbb{R}^2$ and control input $u \!\in\! \mathcal{U} = [-u_{\rm max}\nspace{5} u_{\rm max}]$.
Assume that noisy measurements of ${x_1}$ (e.g., range measurements) can be continuously obtained such that 
\begin{align}
    {y} &= \mathbf{C}\mathbf{x} + {v}(t), \quad \mathbf{C} = \big[1 \nspace{10} 0\big],
\end{align}
with ${|{{v}(t)}| \!\leq\! \Bar{v} \!\in\! \R_{> 0}}$.
Consider the safe set $\Cs \!\triangleq\! \{\mathbf{x}\! \in \!\mathbb{R}^2 \!:\! h(\x) \!\geq\! 0\}$ with ${h(\x) \!\triangleq\! x^2_{\rm max} \!-\! x_1^2}$ for a constant ${x_{\rm max} \!\in\! \R_{> 0}}$.

We utilize the backup set described by \eqref{eq: backupSetLinear_general} in \Cref{prop:linearsystemCBandub}.
Using the condition given in \Cref{prop:linearsystemCBandub}, the gain of the backup controller ${\ub(\xhat) \!=\! - \K \xhat}$ can be selected such that this output feedback controller renders ${\Cb}$ forward invariant for \eqref{eq: truth_dbInt}. Furthermore, if the gain is selected such that ${\sqrt{\gamma} \norm{\K \mathbf{P}^{-1/2}} \!+\! \norm{\K}\Bar{e}_{\rm b} \!\leq\! u_{\rm max}}$, then the backup controller ${\ub(\xhat) = u_{\rm max}\nspace{2}{\rm sat}{(-\K\xhat)}}$ will not saturate\footnote{This can be verified using \cite[(B.4)]{khalil_nonlinear_2015}.} for any $\x$ inside $\Cb$. Here, $\rm sat(\cdot)$ is a smooth saturation function, which we approximate with a hyperbolic tangent in the simulations. For the quadratic barrier defined in \eqref{eq: backupSetLinear_general}, the supremum term in \eqref{eq:supremum_epsilon_b_1} is exactly
\begin{align} \label{eq:intermediate_dbInt}
\sup_{\norm{{\Delta \phi}}\leq\hat{\delta}(T,t)} {\Delta \phi}^\top \mathbf{P}  {\Delta \phi} + 2 \phihatfullbT^\top \mathbf{P} {\Delta \phi},    
\end{align}
and can be computed numerically, or overapproximated by
\begin{align*}
   \epsilon_{\rm b} = \hat{\delta}(T,t)^2 \lambda_{\rm max}(\mathbf{P}) + 2\hat{\delta}(T,t) \big\|{\mathbf{P}\phihatfullbT}\big\| \geq \eqref{eq:intermediate_dbInt}.
\end{align*}
Note that because in this case $\epsilon_{\rm b}$ has an $\xhat$ dependence, the $\dot{\epsilon}_{\rm b}$ term in \eqref{eq: mainthmConstraints_reach_ch6} would include a time derivative of $\xhat$. For simplicity, these terms are neglected in the simulations.

We simulate\footnote{The simulation uses the constants: $u_{\rm max} \!=\! 2$, $k_{\rm p}(\xhat,t) \!=\! u_{\rm max}\nspace{2}{\rm sin}{(t)}$, $\K \!=\! [1.535\nspace{5} 1.382]$, ${x_{\rm max} \!=\! 2}$, ${\mathbf{L} \!=\! [2\nspace{5} 2]^\top}$, ${\Bar{v} \!=\!  0.02}$, ${\ebar \!=\! 0.2}$, ${\Bar{e}_{\rm b} \!=\! 0.15}$, ${\mathbf{Q} \!=\! \mathbf{I}_2}$, ${\gamma \!=\! 0.76}$, ${\Delta \!=\! 0.02 \nspace{4} {\rm sec}}$ and ${T \!=\! 2 \nspace{4} {\rm sec}}$, ${\alpha(r) = 10r + r^3}$, ${\alpha_{\rm b}(r) = 10r}$.} system \eqref{eq:CL_eqtns_dbint_ch6} using the \eqref{eq:e-bcbf-qp}, and utilize the flow bound ${\hat{\delta}(\tau,t)}$ given in \eqref{eq: linearBound_method1} to design the tightening terms ${\epsilon_\tau}$ and ${\epsilon_{\rm b}}$. The estimation error bound ${\delta_x(t,\ebar)}$ is given by \eqref{eq:delx_linear}. \Cref{fig:db_int_case2_ch6_PhasePlot} plots the trajectory of the estimate (solid pink) and the true state (solid black) in the phase space, along with the estimation error bounds (gray) centered around the estimate trajectory. As per the definition of ${\Cb}$, the backup set is an ellipsoidal region centered on the origin (cyan). Despite the true state remaining unknown, the proposed approach achieves output feedback safety using the estimate, ${\xhat(t)}$, and examining the forward evolution of the state estimate using the open-loop method described in \Cref{sec:method1}. The dashed pink circles are the open-loop flow bounds in \eqref{eq: linearBound_method1}, and note that only a few open-loop flows are plotted to reduce clutter, but in the simulation these are propagated at every time step. Initially, due to large state estimation error bounds, the open-loop flow envelope (dashed pink circles) is large, but as the estimation error bounds shrink, so do the flow envelopes, reducing conservatism as global time increases. 

\begin{figure}[t]
    \centering
    {\includegraphics[width=1\linewidth]{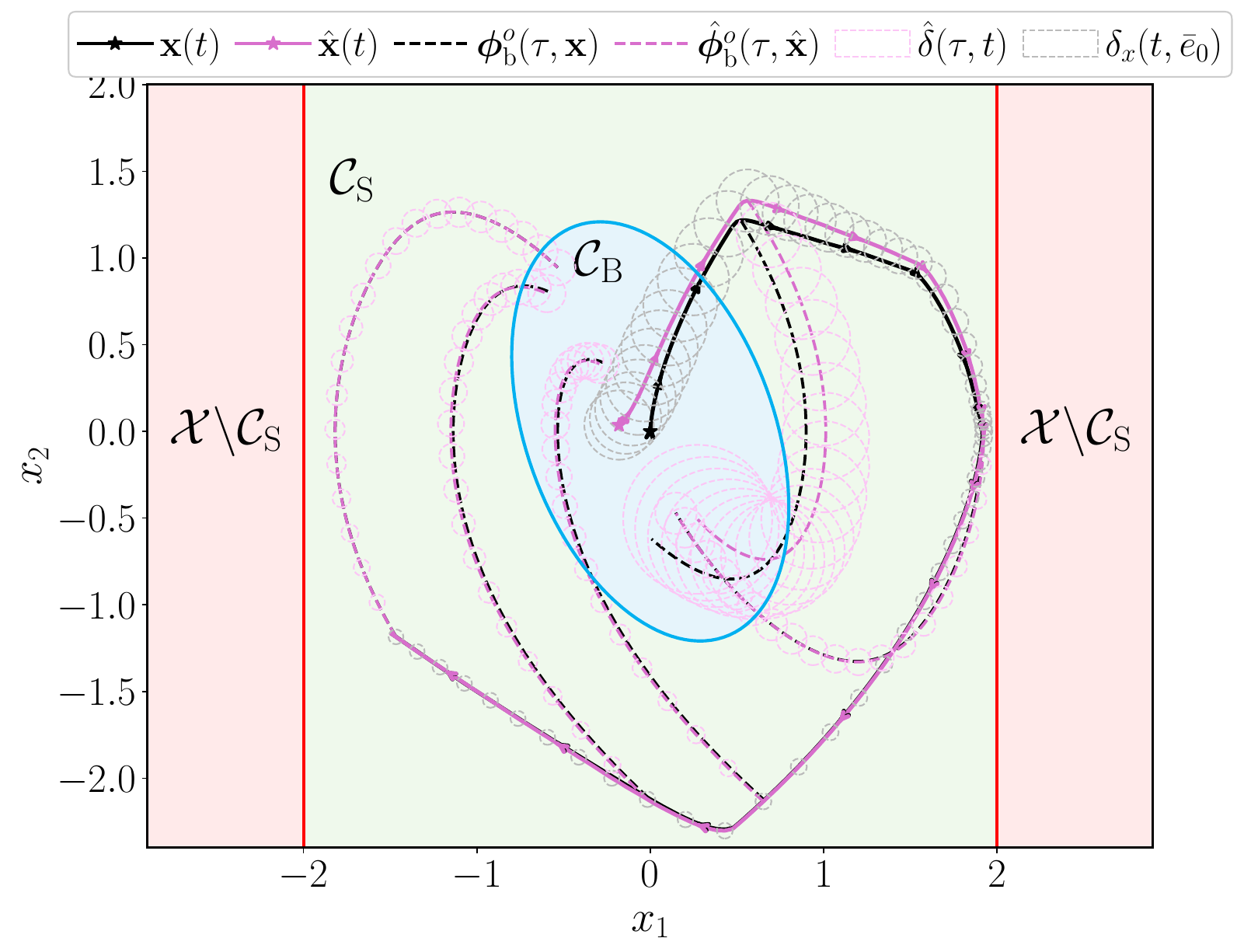}}
    \vspace{-.7cm}
    \caption{Double integrator system \eqref{eq:CL_eqtns_dbint_ch6} under safe output feedback, using the proposed output feedback backup control barrier function approach.
    The true system (solid black) remains in ${\Cs}$ using only knowledge of the estimated state (solid pink). The open-loop estimated backup trajectories and corresponding uncertainty envelope (dashed pink) are used to construct ${\Cihat(t)}$ and by ensuring ${\xhat(t) \!\in\! \Cihat(t)}$, safety of ${\x(t)}$ is guaranteed by \Cref{thm: fwd_inv_Cihat}. See animation at \href{https://youtu.be/q5r6rP1gVlQ}{\footnotesize\texttt{https://youtu.be/q5r6rP1gVlQ}}.  
    }
    \label{fig:db_int_case2_ch6_PhasePlot}
    \vskip -2mm
\end{figure}
\begin{figure}
    \centering
    \includegraphics[width=1\linewidth]{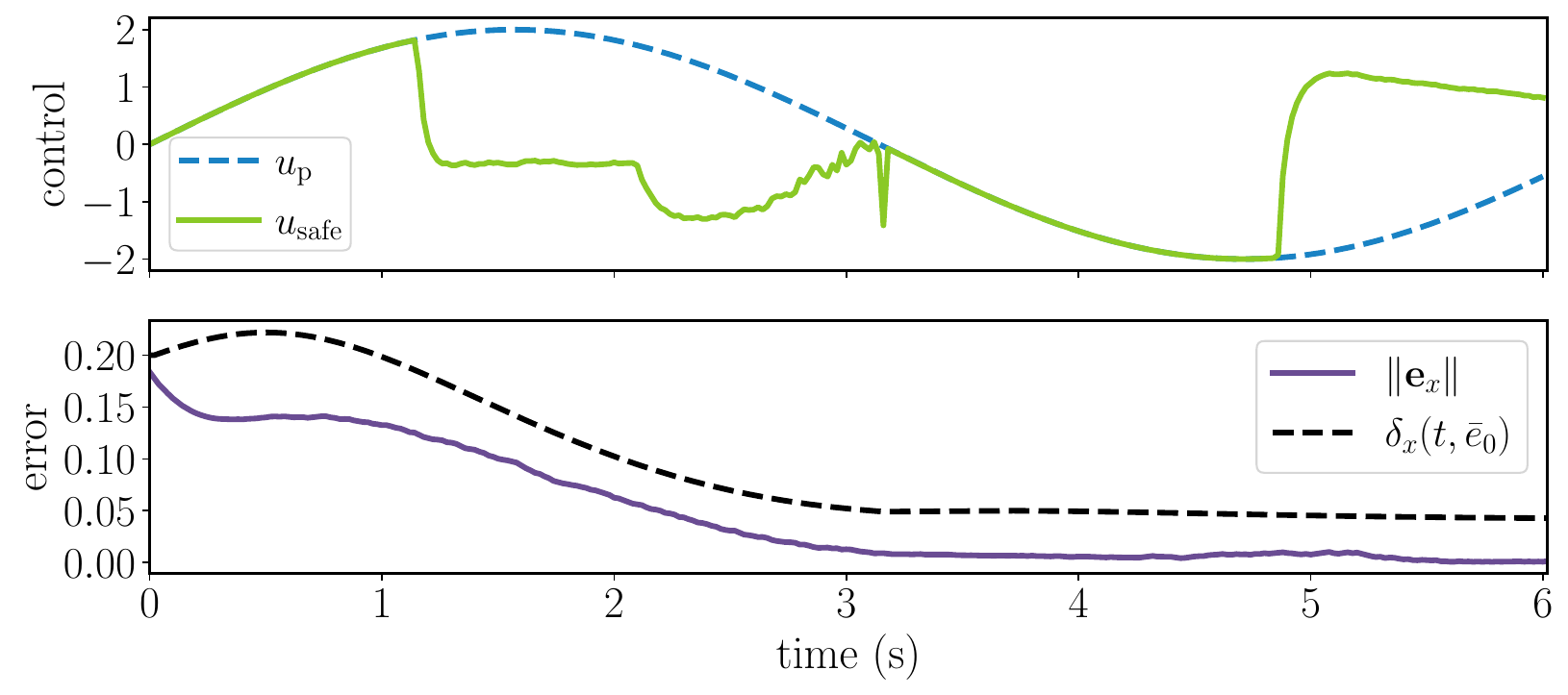}
    \vspace{-.7cm}
    \caption{The safe control input (green) satisfies input constraints, whilst minimizing the deviation from the primary controller (dashed blue) ({\textbf{top}}). State estimation error bounds and control input for double integrator system \eqref{eq:CL_eqtns_dbint_ch6} under output feedback. The estimation error always satisfies ${\norm{\mathbf{e}_x} \leq \delta_x(t,\Bar{e}_0)}$ as guaranteed by \Cref{prop:linearBoundEstimation} ({\textbf{bottom}}).}
    \label{fig:db_int_case2_ch6_BiPlot}
    \vskip -4mm
\end{figure}

\Cref{fig:db_int_case2_ch6_BiPlot} plots the norm of the state estimation error (bottom) and the control input (top). While the estimation error (solid purple) is initially nearly as large as ${\ebar}$, the error shrinks rapidly with the accumulation of measurements, and the upper bound on ${\norm{\mathbf{e}_x}}$ is maintained throughout the simulation as guaranteed by \Cref{prop:linearBoundEstimation}. As evidenced by the safe control input (solid green), the ${\eqref{eq:e-bcbf-qp}}$ intervenes when necessary 
all whilst adhering to the input bounds imposed by $u_{\rm max}$.
\end{casestudy}

\begin{casestudy}[Spacecraft Rotation]
\begin{figure}[t]
    \centering
    \includegraphics[width=1\linewidth]{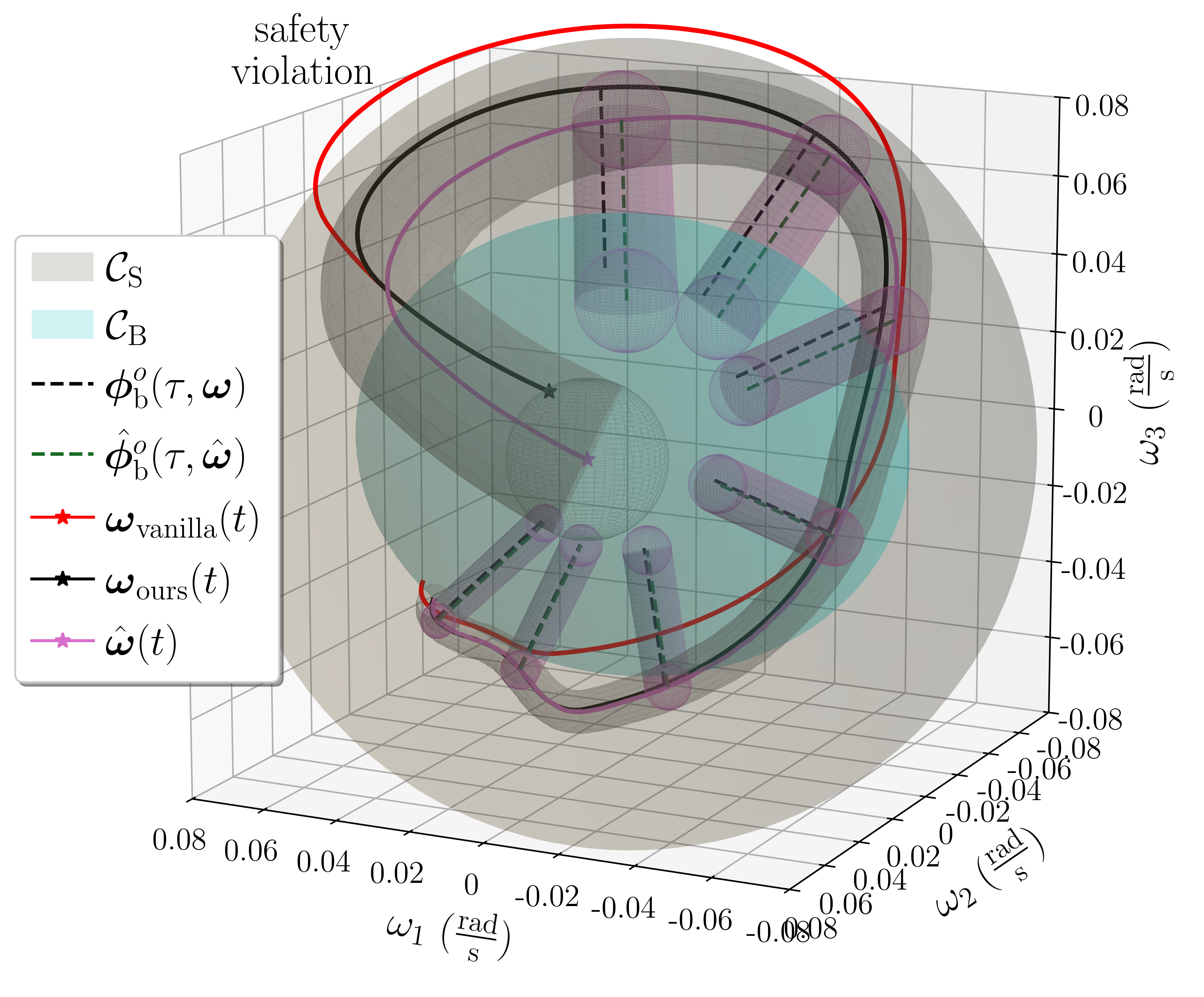}
    \vspace{-.7cm}
    \caption{State space visualization of angular velocity components showing the trajectory of the angular velocity vector over time for the spacecraft model \eqref{eq:CL_eqtns_spacecraft_ch6}. The safety objective is to keep the true angular velocity, ${\w(t)}$, within the sphere for all ${t\geq0}$, using only information about the estimated trajectory, ${\what(t)}$ in pink. The standard approach reviewed in \Cref{sec:bCBF} fails to achieve safety in the presence of estimation error (red) while our approach (black) achieves safety and recursive feasibility. The tube centered around the estimated angular velocity in pink represents the estimation error bound ${\delta_x(t,\Bar{e}_0)}$. Recursive feasibility in the presence of input bounds is achieved by propagating the open-loop estimated flow (dashed green) and ensuring the open-loop flow bounds, ${\hat{\delta}(\tau,t)}$ shown by the pink tubes, are contained in ${\Cs}$ and are contained in $\Cb$ at ${\tau = T}$. Note that for visual clarity, only a few such open-loop flows and bounds are plotted, though in simulation these are computed at each timestep.}
    \vskip -4mm
    \label{fig:3d_plot_spacecraft}
\end{figure}
\begin{figure}[t]
    \centering
    \includegraphics[width=1\linewidth]{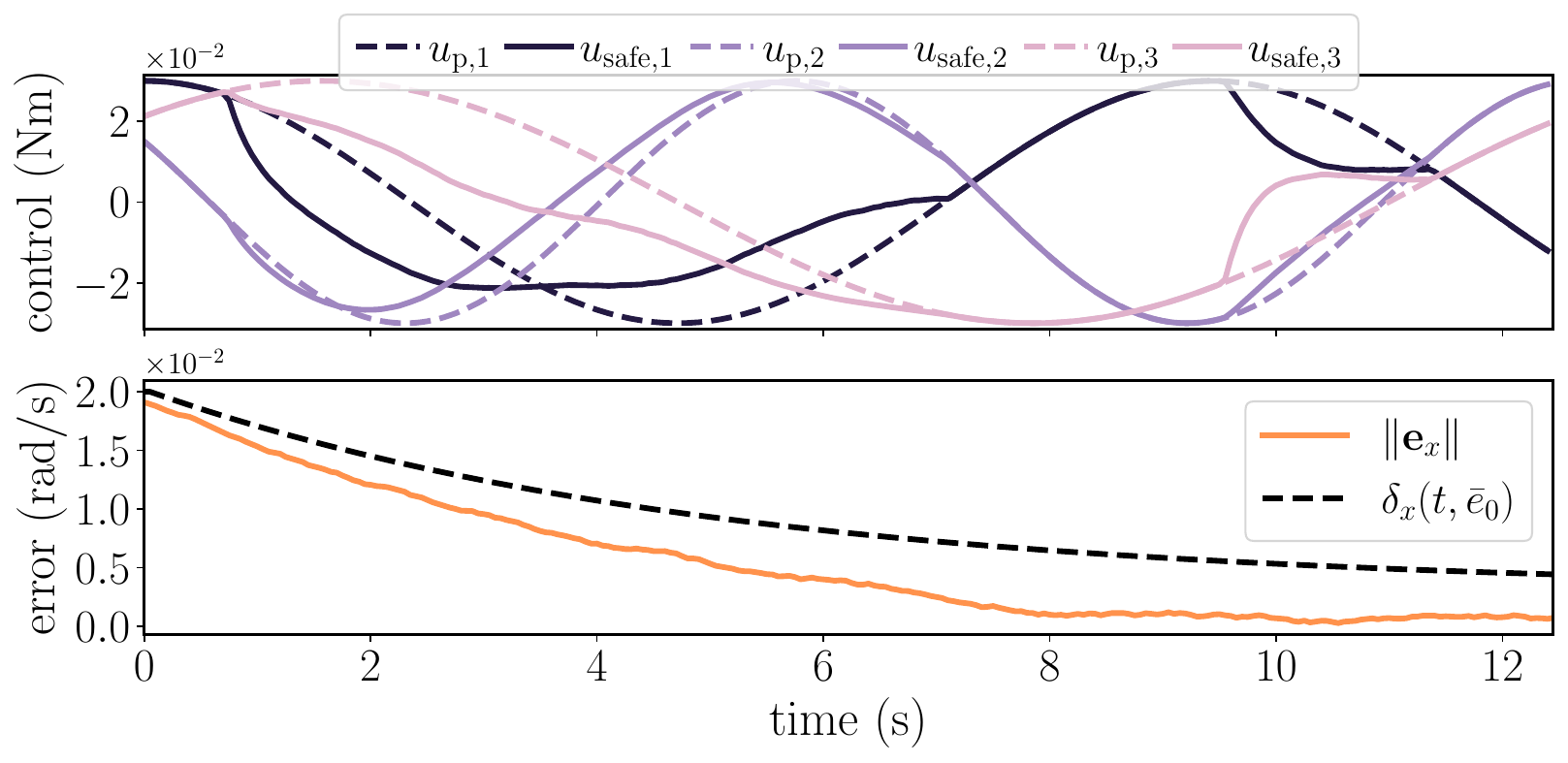}
    \vspace{-.7cm}
    \caption{
    Desired control (dashed) and safe control signal from \eqref{eq:e-bcbf-qp} (solid) over time for each axis (\textbf{top}). Thanks to \Cref{thm:feasible_Cihat}, there always exists a safe output feedback control signal which satisfies the input bounds imposed by ${\mathcal{U}}$. The estimation error (solid orange) satisfies ${\norm{\mathbf{e}_x} \leq \delta_x(t,\Bar{e}_0)}$ (\textbf{bottom}).}
    \vskip -4mm
    \label{fig:subplots_spacecraft}
\end{figure}

Consider next an example of a rigid body spacecraft \cite[Example 2.4.5]{mote2021optimization} with a known inertia tensor in the body frame given by $\J$, with Euler's rotation equations of motion given by 
\begin{subequations} \label{eq:CL_eqtns_spacecraft_ch6}
\begin{align}
    &\dot{\boldsymbol{{\omega}}} = \mathbf{J}^{-1}(-\w\times\mathbf{J}\boldsymbol{\omega} + \mathbf{u}), \label{eq: truth_sc} \\
    &\dot{\widehat{\boldsymbol{{\omega}}}} = \mathbf{J}^{-1}(-\what\times\mathbf{J}\what + \mathbf{u}) + \mathbf{L}(t)\big(\mathbf{y} - \mathbf{z}(\what) \big),  \label{eq: est_sc} \\
    &\w(0) = \w_0, \nspace{6} \what(0) = \what_0,
\end{align}
\end{subequations}
where ${\w \in \mathbb{R}^3}$ is the true angular velocity in the body frame, ${\what \in \mathbb{R}^3}$ is the angular velocity estimate in the body frame, the control input is ${\mathbf{u}\!\in\! \mathcal{U} \!=\! [-u_{\rm max}\nspace{5} u_{\rm max}]^3}$ and the initial conditions satisfy ${\norm{\w_0 \!-\! \what_0} \!\leq\! \ebar}$. The spacecraft estimates its angular velocity using \eqref{eq: est_sc} with noisy gyroscope measurements such that ${    \mathbf{y} = \mathbf{C}\w + \mathbf{v}(t)}$,
with ${\norm{\mathbf{v}(t)} \leq \Bar{v}}$ and $\mathbf{C} = \mathbf{I}_{3}$. The time-varying gain, ${\mathbf{L}(t)}$, is taken to be the gain of the continuous-time EKF \cite[Section 11.2]{khalil_nonlinear_2015} such that ${\mathbf{L}(t) = \boldsymbol{\Sigma}(t)\mathbf{C}^\top(t) \mathbf{R}^{-1}}$,
where ${\boldsymbol{\Sigma}(t)}$ is the solution to the differential Riccati equation
\begin{align*}
    \dot{\boldsymbol{\Sigma}} = \mathbf{F} \boldsymbol{\Sigma} + \boldsymbol{\Sigma} \mathbf{F}^\top + \mathbf{W} - \boldsymbol{\Sigma}\mathbf{C}^{\top}\mathbf{R}^{-1}\mathbf{C}\boldsymbol{\Sigma}, \quad \boldsymbol{\Sigma}(0) = \boldsymbol{\Sigma}_0.
\end{align*}
$\mathbf{F}$ is the Jacobian of \eqref{eq: truth_sc} evaluated at $\what$, and the matrices ${\boldsymbol{\Sigma}_0,\mathbf{W},\mathbf{R}}$ are symmetric and positive definite. We assume that the estimation error is upper bounded by an exponentially decaying function where $\delta_x(t, \Bar{e}_0) \!=\! \Bar{e}_0 \!-\! \beta (1 \!-\! {\rm e}^{- \kappa t}) \! \geq\! \norm{\w(t) \!-\! \what(t)}$ for all ${t\!\geq\!0}$ and some ${\beta,\kappa \!>\! 0}$.

The safety objective is to ensure that the norm of the angular velocity vector of the spacecraft does not exceed a maximum value, to prevent damage to the spacecraft. The safe set is therefore ${\mathcal{C}_{\rm S} = \{\boldsymbol{\omega} \in \mathbb{R}^{3} : h(\boldsymbol{\omega}) \ge 0\}}$ where ${h(\boldsymbol{\omega}) \triangleq \omega_{\rm max}^2 - \left \Vert \boldsymbol{\omega} \right \Vert^2}$ and ${\omega_{\rm max} \in \mathbb{R}_{>0}}$ represents the maximum allowable angular velocity. The primary controller, intentionally designed to violate safety, is given by ${\boldsymbol{k}_{\rm p}(\what,t) = u_{\rm max}\text{cos}([\frac{t}{1.5}\nspace{10} \frac{t}{1.1} + \frac{\pi}{3}\nspace{10} \frac{t}{2} - \frac{\pi}{4}]^\top)}$.

The backup set ${\Cb \!\triangleq\! \{ \w \!\in\! \mathbb{R}^{3} \!:\!  h_{\rm b}(\w) \!=\! \gamma \!-\! \frac{1}{2}\w^\top\J\w \!\ge\! 0\}}$ is a level set of the spacecraft's rotational energy, defined by the positive scalar ${\gamma > 0}$, where $\gamma$ is chosen such that ${\Cb \subseteq \Cs}$. The state feedback backup control law $\ub(\w) = -K_{\rm b}\J\w + \w \times \J \w$ can be shown to satisfy \Cref{ass: robust_inv_inputdist} for gain $K_{\rm b}$ satisfying 
\begin{align*}
    K_{\rm b} \geq \frac{2\lambda_{\rm max}(\J)\norm{\J}\norm{\J^{-1}}\Bar{e}_{\rm b}\omega_{\rm max}}{\sqrt{2\gamma\lambda_{\rm min}(\J)} - \lambda_{\rm max}(\J)\norm{\J}\norm{\J^{-1}}\Bar{e}_{\rm b}}.
\end{align*}
Therefore, by \Cref{lemma: cb_fwd_inv}, the output feedback backup control law ${\ub(\what) = -K_{\rm b}\J\what + \what \times \J \what}$ renders the backup set $\Cb$ forward invariant along \eqref{eq: truth_sc}. Further, ${\ub}$ will not violate input constraints within $\Cs$ as long as ${u_{\rm max}}/({\norm{\J}\omega_{\rm max}}) - \omega_{\rm max} \geq K_{\rm b}$.

We simulate system \eqref{eq:CL_eqtns_spacecraft_ch6} to demonstrate the efficacy of the \eqref{eq:e-bcbf-qp} in comparison with the standard backup CBF approach reviewed in \Cref{sec:bCBF} which directly uses the estimated state in place of the true state\footnote{The simulations use the following constants: ${u_{\rm max} = 0.03\nspace{4} {\rm Nm}}$,  ${\omega_{\rm max} = 0.1\nspace{4} {\rm rad/s}}$, ${K_{\rm b} = 0.2746 \nspace{4} {\rm s^{-1}}}$, ${\Bar{v} = 0.01  \nspace{4} {\rm rad/s}}$, ${\ebar = 0.02 \nspace{4} {\rm rad/s}}$, ${\Bar{e}_{\rm b} = 0.01 \nspace{4} {\rm rad/s}}$, ${\beta = 0.017 \nspace{4} {\rm rad/s}}$, ${\kappa = 0.2 \nspace{4} {\rm s^{-1}}}$, ${\gamma = 0.0013 \nspace{4} {\rm J}}$, ${\Delta = 0.05 \nspace{4} {\rm sec}}$, ${T = 3 \nspace{4} {\rm sec}}$, ${\J = \texttt{diag}\{[0.5186 \nspace{10}  0.8006 \nspace{10} 0.8006]\} \nspace{4} {\rm kg m^2}}$.}. In this scenario, the open-loop flow bound in \Cref{lemma: flowbound_method1_ch6} is used, and the tightening constants are computed using the local Lipschitz constants of $h$ and $h_{\rm b}$ (see Remark~\ref{rem:tighteningTerms}).
\Cref{fig:3d_plot_spacecraft} verifies that the true angular velocity using our approach in solid black obeys the safety constraint (i.e., is safe w.r.t. \Cref{def:outputSafety}) even though the output feedback controller only uses estimated states. Further, as evidenced by \Cref{fig:subplots_spacecraft}, the safe control signal always adheres to the input bounds imposed by ${\mathcal{U}}$. In contrast, the true angular velocity using the vanilla backup CBF approach in solid red violates the constraint by exiting the sphere, as shown in \Cref{fig:3d_plot_spacecraft}, because this method does not account for the state estimation error.
\end{casestudy}
\section{Conclusion} \label{sec:conclusion}

This manuscript developed novel techniques for maintaining output feedback safety of input-constrained systems where only an estimate of the true state is known. By designing a backup controller which renders a backup set forward invariant under output feedback, we extended the approach of backup CBFs for this context, and presented two methods for examining the flow of the open-loop estimated backup system. Since this flow can be computed, we derived forward invariance conditions for a set based on this flow, and showed that these conditions imply output feedback safety for the unknown, true state.
\section{Acknowledgments}

This work was supported by DARPA under the LINC program and by the Technology Innovation Institute (TII).

\bibliographystyle{ieeetr} 
\bibliography{refs}
\end{document}